\def\Real{\mathbb{R}}
\newcommand{\cancel}[1]{}
\newenvironment{sdp}[2]{
\smallskip
\begin{center}
\begin{tabular}{ll}
#1 & #2\\
subject to
}
{
\end{tabular}
\end{center}
\smallskip
}
\def\01{\{0,1\}}
\newcommand{\eps}{\varepsilon}
\newcommand{\ket}[1]{|#1\rangle}
\newcommand{\bra}[1]{\langle#1|}
\newcommand{\outp}[2]{|#1\rangle\langle#2|}
\newcommand\beq{\begin{equation}}
\newcommand\eeq{\end{equation}}
\newcommand\bea{\begin{eqnarray}}
\newcommand\eea{\end{eqnarray}}
\newcommand{\Tr}{\mbox{\rm Tr}}
\newcommand{\tr}{\mathrm{tr}}
\newcommand{\id}{\mathbb{I}}
\newtheorem{theorem}{Theorem}
\newtheorem{lemma}{Lemma}
\newtheorem{claim}{Claim}
\newtheorem{corollary}{Corollary}
\newtheorem{protocol}{Protocol}
\newtheorem{definition}{Definition}
\newenvironment{proof}
{\noindent {\bf Proof. }}
{{\hfill $\Box$}\\
 \smallskip}
\newcommand{\mM}{\mathcal{M}}
\newcommand{\mP}{\mathcal{P}}
\newcommand{\mX}{\mathcal{X}}
\newcommand{\mS}{\mathcal{S}}
\newcommand{\mN}{\mathcal{N}}
\newcommand{\setI}{\mathcal{I}}
\newcommand{\setT}{\mathcal{T}}
\newcommand{\setX}{\mathcal{X}}
\newcommand{\setF}{\mathcal{F}}
\newcommand{\set}[1]{\{#1\}}
\newcommand{\assign}{\ensuremath{\kern.5ex\raisebox{.1ex}{\mbox{\rm:}}\kern -.3em =}}
\renewcommand{\ol}[1]{\overline{#1}}
\begin{document}

\title{Cryptography from Noisy Storage}

\author{Stephanie \surname{Wehner}}
\email[]{wehner@caltech.edu}
\affiliation{Institute for Quantum Information, Caltech, 1200 E California Blvd, Pasadena, CA 91125, USA}
\author{Christian \surname{Schaffner}}
\email[]{c.schaffner@cwi.nl}
\affiliation{CWI, Kruislaan 413, 1098 SJ
Amsterdam, The Netherlands}
\author{Barbara M. \surname{Terhal}}
\affiliation{IBM, Watson Research Center, P.O. Box 218, Yorktown Heights, NY, USA}
\email[]{bterhal@gmail.com}
\date{\today}

\begin{abstract}
 We show how to implement cryptographic primitives based on the
  realistic assumption that quantum storage of qubits is noisy. We
  thereby consider individual-storage attacks, i.e.~the dishonest
  party attempts to store each incoming qubit separately. Our model is
  similar to the model of bounded-quantum storage, however, we
  consider an explicit noise model inspired by present-day technology.
  To illustrate the power of this new model, we show that a protocol
  for oblivious transfer (OT) is secure for \emph{any} amount of
  quantum-storage noise, as long as honest players can perform perfect
  quantum operations. Our model also allows the security of
  protocols that cope with noise in the operations of the honest
  players and achieve more advanced tasks such as secure
  identification.
\end{abstract}
\maketitle

Traditional cryptography is concerned with the secure and reliable
transmission of messages. With the advent of widespread electronic
communication new cryptographic tasks have become increasingly
important. Examples of such tasks are secure identification,
electronic voting, online auctions, contract signing and other
applications where the protocol participants do not necessarily
trust each other. It is well-known that almost all these interesting
tasks are impossible to realize without any restrictions on the
participating players, neither classically nor with the help of
quantum communication~\cite{lo:insecurity}. It is therefore an
important task to come up with a cryptographic model which restricts
the capabilities of adversarial players and in which these tasks
become feasible. It turns out that all such two-party protocols can
be based on a simple primitive called 1-2 Oblivious
Transfer~\cite{kilian:foundingOnOT} (1-2 OT), first introduced
in~\cite{wiesner:conjugate,rabin:ot,even:firstOT}. Hence, 1-2 OT is commonly used to provide a
``proof of concept'' for the universal power of a new model. In 1-2
OT, the sender Alice starts off with two bit strings $S_0$ and
$S_1$, and the receiver Bob holds a choice bit $C$.  The protocol
allows Bob to retrieve $S_C$ in such a way that Alice does not learn
any information about $C$ (thus, Bob cannot simply ask for $S_C$).
At the same time, Alice must be ensured that Bob only learns $S_C$,
and no information about the other string $S_{1-C}$ (thus, Alice
cannot simply send him both $S_0$ and $S_1$). A 1-2 OT protocol is
called unconditionally secure when neither Alice nor Bob can break
these conditions, even when given
unlimited resources.

In this letter, we propose a \emph{cryptographic model} based on
current practical and near-future technical limitations, namely that
quantum storage is noisy. Thus the presence of noise, the very
problem that makes it so hard to implement a quantum computer, can
actually be turned to our advantage. Recently it was shown that
secure OT is possible when the receiver Bob has a limited amount of
quantum memory~\cite{serge:bounded,serge:new} at his disposal.
Within this `bounded-quantum-storage model' OT can be implemented
securely as long as a dishonest receiver Bob can store at most
$n/4-O(1)$ qubits coherently, where $n$ is the number of qubits
transmitted from Alice to Bob. This approach assumes an explicit
limit on the physical number of qubits (or more precisely, on the rank
of the adversary's quantum state). However, at present we do not
know of any practical physical situation which enforces such a limit
for quantum information. We therefore propose an alternative model
of noisy quantum storage inspired by present-day physical
implementations: We require no explicit memory bound, but we assume
that any qubit that is placed into quantum storage undergoes a
certain amount of noise. The advantage of our model is that we can
evaluate the security parameters of a protocol explicitly in terms
of the noise. In this letter, we show that the OT protocol
from~\cite{serge:new} is secure in our new model. This simple OT
protocol could be implemented using photonic qubits (using
polarization or phase-encoding) with standard BB84 quantum key
distribution~\cite{BB:84, GRTZ:qkd_review} hardware, only with
different classical post-processing.

We analyze the case where the adversary performs individual-storage
attacks.  More precisely, Bob may choose to (partially) measure (a
subset of) his qubits immediately upon reception using an error-free
{\em product} measurement. In addition he can store each incoming
qubit, or post-measurement state from a prior partial measurement,
separately and wait until he gets additional information from Alice
(at Step~3 in Protocol~1). Once he obtained the additional
information he may perform an arbitrary coherent measurement on his
stored qubits using the stored classical data. We thereby assume that
qubit $q_i$ undergoes some noise while in storage, and we also
assume that the noise acts independently on each qubit.  In the
following, we use the super-operator $\mS_i$ to denote the combined
channel given by Bob's initial (partial) measurement and the noise.
Practically, noise can arise as a result of transferring the qubit
onto a different physical carrier, such as an atomic ensemble or
atomic state for example, or into an error-correcting code with
fidelity less than 1. In addition, the (encoded) qubit will undergo
noise once it has been transferred into `storage'. Hence, the
quantum operation $\mS_i$ in any real world setting necessarily
includes some form of noise.

First, we show that for any initial measurement, and \emph{any} noisy
superoperator $\mS_i$ the 1-2 OT protocol is secure if the honest
participants can perform {\em perfect} noise-free quantum
operations. As an explicit example we consider the case of
depolarizing noise during storage. In particular, we can show the
following all-or-nothing result: if Bob's storage noise is above a
certain threshold, his optimal cheating strategy is to perform a
measurement in the so-called Breidbart basis. On the other hand, if
the noise level is below the threshold, he is best off storing each
qubit as is.

Second, we consider a more practical setting using photonic qubits
where the honest participants experience noise themselves: their
quantum operations may be inaccurate or noisy, they may use weak
laser pulses instead of single photon sources, and qubits may
undergo decoherence during transmission. Note, however, that unlike
in QKD, we typically want to execute such protocols over very short
distances (for example in banking applications) where the
depolarization rate during transmission is very low.  We give a
practical OT-protocol that is a small modification of the perfect
protocol. It allows us to to deal with erasure errors (i.e.~photon
loss) separately.
We show how to derive trade-offs between the amount of storage
noise, the amount of noise for the operations performed by the
honest participants, and the security of the protocol.

Finally, we
briefly discuss the security of our protocol from the
future perspective of fault-tolerant quantum computation with photonic qubits.
We also discuss the issue of analyzing fully
coherent attacks for our protocol. Indeed, there is a close relation between the OT
protocol and BB84 quantum key distribution.
Our security analysis can in principle be carried over to obtain a
secure identification scheme in the noisy-quantum-storage model
analogous to~\cite{DFSS:secureid}. This scheme achieves
password-based identification and is of particular practical
relevance as it can be used for banking applications. \\

\subsection{Related work}
Precursors of the idea of basing
cryptographic security on storage-noise are already present
in~\cite{crepeau:practicalOT}, but no rigorous analysis was carried
through in that paper. Furthermore, it was pointed out
in~\cite{chris:diss,DFSS08journal} how the original
bounded-quantum-storage analysis applies in the case of noise levels
which are so large that the rank of a dishonest player's quantum
storage is reduced to $n/4$. In contrast, we are able to give an
explicit security trade-off even for small amounts of noise.  We
note that our security proof does not exploit the noise in the
communication channel (which has been done in the classical setting
to achieve cryptographic tasks, see e.g.~\cite{crepeau:weakenedOT,CMW04}), but is
solely based on the fact that the dishonest receiver's quantum
storage is noisy. A model based on classical noisy storage is akin
to the setting of a classical noisy channel, if the operations are
noisy, or the classical bounded-storage model, both of which are
difficult to enforce in practise. Another technical limitation has
been considered in \cite{salvail:physical} where a bit-commitment
scheme was shown secure under the assumption that the dishonest
committer can only measure a limited amount of qubits coherently.
Our analysis differs in that we can in fact allow any coherent
destructive measurement at the end of the protocol.

\section{Definitions and Tools}
We start by introducing some tools, definitions and technical lemmas.
To define the security of OT we need to express what it means for a
dishonest quantum player not to gain any information.  Let $\rho_{XE}$ be a
state that is part classical, part quantum, i.e.~a cq-state
$\rho_{XE}=\sum_{x \in \setX} P_X(x) \outp{x}{x} \otimes \rho_E^x$.
Here, $X$ is a classical random variable distributed over the finite
set $\setX$ according to distribution $P_{X}$. The {\em non-uniformity} of $X$ given
$\rho_E = \sum_x P_X(x) \rho_E^x$
is defined as
\begin{equation}
d(X|\rho_E) := \frac{1}{2}||\,\id/|\mX| \otimes \rho_E-\sum_{x}P_{X}(x) \outp{x}{x} \otimes \rho_{E}^x\,||_{\tr},
\end{equation}
where
$||A||_{\tr} = \Tr\sqrt{A^\dagger A}$. Intuitively, if
$d(X|\rho_E) \leq \eps$ the distribution of $X$ is $\eps$-close to
uniform even given $\rho_E$, i.e., $\rho_E$ gives hardly any
information about $X$.
A simple property of the non-uniformity which follows from its definition is that
\begin{equation}
d(X|\rho_{ED})=d(X|\rho_E)
\label{eq:indep}
\end{equation}
for any cq-state of the form $\rho_{XED}=\rho_{XE} \otimes \rho_D$.

We prove the security of a randomized version of OT. In such a
protocol, Alice does not choose her input strings herself, but
instead receives two strings $S_0$, $S_1 \in \01^\ell$ chosen
uniformly at random by the protocol. Randomized OT (ROT) can easily
be converted into OT: after the ROT protocol is completed, Alice uses
her strings $S_0,S_1$ obtained from ROT as one-time pads to encrypt
her original inputs $\hat{S_0}$ and $\hat{S_1}$, i.e.~she sends an
additional classical message consisting of $\hat{S_0} \oplus S_0$
and $\hat{S_1} \oplus S_1$ to Bob. Bob can retrieve the message of
his choice by computing $S_C \oplus (\hat{S}_C \oplus S_C) =
\hat{S}_C$. He stays completely ignorant about the other message
$\hat{S}_{1-C}$ since he is ignorant about $S_{1-C}$. The security
of a quantum protocol implementing ROT is formally defined in \cite{serge:bounded,serge:new}:
\begin{definition} \label{def:ROT}
An $\eps$-secure 1-2 $\mbox{ROT}^\ell$ is a protocol between Alice
and Bob, where Bob has input $C \in \01$, and Alice has no input.
For any distribution of $C$:
\begin{itemize}
\item (Correctness) If both parties are honest, Alice gets output $S_0,S_1 \in \01^\ell$ and
Bob learns $Y = S_C$ except with probability $\eps$.
\item (Receiver-security) If Bob is honest and obtains output $Y$, then for any cheating strategy of Alice resulting in her state $\rho_A$,
there exist random variables $S'_0$ and $S'_1$ such that $\Pr[Y=S'_C] \geq 1- \eps$ and
$C$ is independent of $S'_0$,$S'_1$ and $\rho_A$.%
\item (Sender-security) If Alice is honest, then for any cheating strategy of Bob resulting in his state $\rho_B$,
there exists a random variable $C' \in \01$ such that $d(S_{1-C'}|S_{C'}C'\rho_B) \leq \eps$.
\end{itemize}
\end{definition}

The OT protocol makes use of two-universal hash functions.
These hash functions are used for privacy amplification similar as in
quantum key distribution. A class $\setF$ of functions $f: \01^n
\rightarrow \01^\ell$ is called two-universal if for all $x\neq y \in
\01^n$ and $f \in \setF$ chosen uniformly at random from $\setF$, we have
$\Pr[f(x) = f(y)] \leq 2^{-\ell}$.  For example, the set of all affine
functions from $\01^n$ to $\01^\ell$ is two-universal~\cite{CarWeg79}.  The following
theorem expresses how hash functions can increase the privacy of a
random variable X given a quantum adversary holding $\rho_E$ and the
function $F$:

\begin{theorem}[Th. 5.5.1 in~\cite{renato:diss} (see also~\cite{renato:compose})]
Let $\setF$ be a class of two-universal hash functions from $\01^n$ to $\01^\ell$.
Let $F$ be a random variable that is uniformly and independently distributed over $\setF$,
and let $\rho_{XE}$ be a cq-state. Then,
$$
d(F(X)|F,\rho_E) \leq 2^{-\frac{1}{2}\left(H_2(X|\rho_E) - \ell\right)-2},
$$
where $H_2(\cdot|\cdot)$ denotes the conditional collision entropy defined in~\cite{renato:diss}
as $H_2(X|\rho_E) := - \log \Tr((\id \otimes \rho_E^{-\frac{1}{2}}) \rho_{XE})^2$
of the cq-state $\rho_{XE}$.
\end{theorem}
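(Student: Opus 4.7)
The plan is to convert the trace distance defining $d(F(X)|F,\rho_E)$ into a Hilbert--Schmidt ($L_2$) quantity twisted by $\rho_E^{-1/4}$, take expectations over the random hash function $F$, and then exploit two-universality to bound the result in terms of the collision entropy $H_2(X|\rho_E)$.

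First, I would establish a quantum analogue of the Cauchy--Schwarz bound $\|v\|_1 \leq \sqrt{d}\,\|v\|_2$ appropriate for cq-states. Explicitly, for any cq-state $\rho_{ZE}$ with $Z$ classical on an alphabet of size $d$, one shows
$$d(Z|\rho_E) \;\leq\; \tfrac{1}{2}\sqrt{d}\,\Bigl(\Tr\Bigl[\bigl((\id\otimes\rho_E^{-1/4})(\rho_{ZE} - \id/d\otimes\rho_E)(\id\otimes\rho_E^{-1/4})\bigr)^2\Bigr]\Bigr)^{1/2}.$$
The $\rho_E^{-1/4}$ weights are crucial: they implement a Hilbert--Schmidt inner product twisted by $\rho_E$ so that the $\sqrt{d}$ factor does not pick up a $\sqrt{\rank\rho_E}$ dependence. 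This is the quantum version of the classical leftover-hash-lemma inequality.

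Next, I would apply this bound with $Z = F(X)$, treating $F$ as an extra classical register on the side-information, so effectively $Z = (F(X),F)$ lives on an alphabet of size $2^\ell |\setF|$. Taking the expectation over the uniform choice of $F$ and using Jensen's inequality to pull $\mathbb{E}_F$ inside the square root, I would expand the Hilbert--Schmidt square. The expansion produces sums over pairs $(x,x')$ weighted by $\Pr_F[F(x)=F(x')]\,P_X(x)P_X(x')$, and traces of the form $\Tr\bigl[(\rho_E^{-1/4}\rho_E^x\rho_E^{-1/4})(\rho_E^{-1/4}\rho_E^{x'}\rho_E^{-1/4})\bigr]$. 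The diagonal ($x=x'$) contribution is precisely the collision quantity defining $2^{-H_2(X|\rho_E)}$, while for $x\neq x'$ two-universality gives $\Pr_F[F(x)=F(x')]\leq 2^{-\ell}$. The latter bound is engineered exactly so that the subtracted uniform term $\id/2^\ell\otimes\rho_{FE}$ cancels the bulk of the off-diagonal contribution, leaving only a small residual controlled by $2^{-\ell}\cdot 2^{-H_2(X|\rho_E)}$.

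Combining the pieces and extracting the square root yields the stated bound $d(F(X)|F,\rho_E) \leq 2^{-(H_2(X|\rho_E)-\ell)/2 - 2}$. The main obstacle I anticipate is the first step: establishing the $\rho_E^{-1/4}$-twisted quantum $L_1\leq\sqrt{d}\,L_2$ inequality with a $d$ that depends only on the classical alphabet and not on $\rank\rho_E$. This hinges on a careful Cauchy--Schwarz argument with respect to the inner product $\langle A,B\rangle := \Tr[(\id\otimes\rho_E^{1/2})A^\dagger(\id\otimes\rho_E^{1/2})B]$, and it is where all the quantum non-commutativity between the classical and quantum parts is concentrated. Once that inequality is in hand, the two-universality computation in the subsequent steps is an essentially routine expansion of a Hilbert--Schmidt square.
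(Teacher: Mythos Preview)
The paper does not prove this theorem at all: it is quoted verbatim as Th.~5.5.1 from Renner's thesis~\cite{renato:diss} (see also~\cite{renato:compose}) and used as a black box, with the paper then passing to the simplified Lemma~\ref{simplifiedPA} for its actual applications. So there is no ``paper's own proof'' to compare against.

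That said, your proposal is correct and is precisely the standard argument from the cited reference. The $\rho_E^{-1/4}$-twisted $L_1$--$L_2$ inequality you identify as the crux is exactly Renner's Lemma~5.2.3, and the subsequent expansion of the Hilbert--Schmidt square under the expectation over $F$, with the two-universality bound killing the off-diagonal terms, is his Lemma~5.4.3. Your sketch captures both steps accurately, including the key point that the twist ensures the dimension factor in Cauchy--Schwarz depends only on the classical output alphabet $2^\ell$ and not on $\rank\rho_E$.
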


In our application we will make use of a simplified form of this theorem which follows directly
from~\cite[Lemma 1]{wehner06d}. The non-uniformity in the theorem above is bounded by the average success
probability of guessing $x$ given the state $\rho_E$:
\begin{lemma}\label{simplifiedPA}
For a measurement $M$ with POVM elements $\{M_x\}_{x \in {\cal X}}$
let $p_{y|x}^M={\rm Tr} M_y \rho_E^x$ the probability of outputting
guess $y$ given $\rho_E^x$. Then $P_g(X|\rho_E) = \sup_M \sum_x
P_X(x) p_{x|x}^M$ is the maximal average success probability of
guessing $x \in \setX$ given the reduced state $\rho_E$ of the
cq-state $\rho_{XE}$. We have
$$
d(F(X)|F,\rho_E) \leq 2^{\frac{\ell}{2}-1} \sqrt{P_g(X|\rho_E)} \, .
$$
If we have an additional $k$ bits of classical information $D$ about $X$,
we can bound
\begin{equation}
d(F(X)|F,D,\rho_E) \leq 2^{\frac{\ell+k}{2}-1}\sqrt{P_g(X|\rho_E)} \, .
\label{eq:addinfo}
\end{equation}
\end{lemma}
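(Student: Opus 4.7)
The plan is to derive the lemma directly from the Renner-type theorem stated just above, replacing the conditional collision entropy there by the more operational guessing probability. The central ingredient is the bound $2^{-H_2(X|\rho_E)} \le P_g(X|\rho_E)$, i.e.\ the conditional-quantum analogue of the classical fact that the collision probability is at most the maximum probability. A concrete proof uses the pretty good measurement with POVM elements $M_x = P_X(x)\,\rho_E^{-1/2}\rho_E^x\rho_E^{-1/2}$: unfolding the definition of $H_2(X|\rho_E)$ shows that its average success probability equals exactly $2^{-H_2(X|\rho_E)}$, which is in turn at most the optimal $P_g(X|\rho_E)$. Equivalently, one can appeal to the Rényi-entropy hierarchy $H_{\min}(X|\rho_E)\le H_2(X|\rho_E)$ together with $P_g(X|\rho_E) = 2^{-H_{\min}(X|\rho_E)}$.

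Plugging $H_2(X|\rho_E)\ge -\log P_g(X|\rho_E)$ into the cited theorem immediately yields
\begin{equation*}
d(F(X)|F,\rho_E) \le 2^{-(H_2(X|\rho_E)-\ell)/2-2} \le 2^{\ell/2-1}\sqrt{P_g(X|\rho_E)},
\end{equation*}
which is the first claim of the lemma (in fact with a factor of two to spare). For \eqref{eq:addinfo}, I would reduce to the first bound by absorbing the classical register $D$ into the quantum side information: writing $\rho_{DE} = \sum_d P_D(d)\ketbra{d}{d}\otimes \rho_E^d$ and applying the first inequality with this cq-state as the adversary's state gives $d(F(X)|F,D,\rho_E) \le 2^{\ell/2-1}\sqrt{P_g(X|D,\rho_E)}$. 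It then remains to observe that revealing $k$ classical bits can increase the guessing probability by at most a factor $2^k$, i.e.\ $P_g(X|D,\rho_E) \le 2^k\,P_g(X|\rho_E)$. This is the standard min-entropy chain rule $H_{\min}(X|D\rho_E) \ge H_{\min}(X|\rho_E) - k$ for a $k$-bit classical register; it has a one-line proof via an adversary who first guesses $D$ uniformly at random and then runs the optimal strategy for $X$ given $\rho_E$ alone. Combining produces the required factor $2^{(\ell+k)/2-1}$.

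The only nontrivial ingredient is therefore the conditional-quantum inequality $2^{-H_2(X|\rho_E)}\le P_g(X|\rho_E)$. In the classical case it is immediate from $\sum_x P_X(x)^2 \le \max_x P_X(x)$, but the quantum version is genuinely a theorem: it requires either the pretty-good-measurement calculation sketched above, or an appeal to Renner's hierarchy of conditional Rényi entropies. Once this inequality is in hand, everything else is a routine substitution into the Renner bound plus a one-line chain-rule argument for the side information, which is presumably why the authors simply cite Lemma~1 of~\cite{wehner06d}.
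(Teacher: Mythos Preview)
Your approach is correct and is essentially the intended derivation: the paper does not give its own proof but simply states that the lemma ``follows directly from~\cite[Lemma 1]{wehner06d},'' and what you wrote is precisely how one extracts the statement from Renner's privacy-amplification theorem. The key inequality $2^{-H_2(X|\rho_E)}\le P_g(X|\rho_E)$ via the pretty-good measurement is the right ingredient, and your observation that this even saves a factor of two over the stated bound is accurate.

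One small slip in your chain-rule sketch for \eqref{eq:addinfo}: the adversary who guesses $D$ uniformly must then run the optimal strategy for $X$ given \emph{the guessed value of $D$} and $\rho_E$, not ``the optimal strategy for $X$ given $\rho_E$ alone.'' As written, the guess of $D$ would be irrelevant and you would just recover $P_g(X|\rho_E)$, which proves nothing. With the correct phrasing, the POVM $\{2^{-k}M_x^d\}_{x,d}$ (where $\{M_x^d\}_x$ is optimal for $X$ conditioned on $D=d$) is a valid measurement for $(X,D)$ from $\rho_E$, achieving success probability $2^{-k}P_g(X|D,\rho_E)\le P_g(XD|\rho_E)\le P_g(X|\rho_E)$, which is the inequality you need. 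This is clearly what you meant, since you correctly name it as the min-entropy chain rule.
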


The following lemma is proven in the Appendix and states that the
optimal strategy to guess $X=x \in \set{0,1}^n$ given individual
quantum information about the bits of $X$ is to measure each
register individually.
\begin{lemma}\label{lemma:individual}
Let $\rho_{XE}$ be a cq-state with uniformly distributed $X=x \in
\01^n$ and $\rho_E^x = \rho_{E_1}^{x_1} \otimes \ldots \otimes
\rho_{E_n}^{x_n}$. Then the maximum probability of guessing $x$
given state $\rho_E$ is $P_g(X|\rho_E)  = \Pi_{i=1}^n
P_g(X_i|\rho_{E_i})$, which can be achieved by measuring each
register separately.
\end{lemma}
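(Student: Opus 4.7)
The plan is to split the lemma into two matching bounds. The easy direction is achievability: letting $\{M_0^{(i)}, M_1^{(i)}\}$ be an optimal POVM on register $i$ for distinguishing $\rho_{E_i}^0$ from $\rho_{E_i}^1$ (each with prior $1/2$, since a uniform $X$ has uniform marginals), the product POVM $\{M_{x_1}^{(1)} \otimes \cdots \otimes M_{x_n}^{(n)}\}_{x \in \{0,1\}^n}$ on the tensor-product state factorizes, giving success probability $\prod_i P_g(X_i | \rho_{E_i})$. This provides the $\geq$ direction and also produces the claimed local-measurement strategy.

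The main work is the matching upper bound. My approach is to use semidefinite programming duality for optimal minimum-error state discrimination. For each single-register problem, strong duality (the Holevo--Yuen--Kennedy--Lax conditions) guarantees the existence of a Hermitian operator $Q_i \geq 0$ on $\mathcal{H}_{E_i}$ such that $Q_i \geq \tfrac{1}{2}\rho_{E_i}^{x_i}$ for both $x_i \in \{0,1\}$, with $\Tr(Q_i) = P_g(X_i|\rho_{E_i}) =: p_i$. I would then set $Q := Q_1 \otimes \cdots \otimes Q_n$ as a candidate dual witness for the joint problem.

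The key algebraic step is to check that $Q \geq 2^{-n}\rho_E^x$ for every $x \in \{0,1\}^n$. This follows inductively from the identity $A \otimes C - B \otimes D = (A - B) \otimes C + B \otimes (C - D)$ together with the fact that the tensor product of two positive operators is positive: starting from $Q_i - \tfrac{1}{2}\rho_{E_i}^{x_i} \geq 0$ and $\rho_{E_i}^{x_i} \geq 0$, repeated application shows $Q_1 \otimes \cdots \otimes Q_n \geq 2^{-n}\rho_{E_1}^{x_1} \otimes \cdots \otimes \rho_{E_n}^{x_n}$. Given any joint POVM $\{M_x\}_{x \in \{0,1\}^n}$, positivity of $M_x$ and $Q - 2^{-n}\rho_E^x$ then yields
\[
\sum_x 2^{-n}\,\Tr(M_x \rho_E^x) \;\leq\; \sum_x \Tr(M_x Q) \;=\; \Tr(Q) \;=\; \prod_{i=1}^n p_i,
\]
using $\sum_x M_x = \id$ in the penultimate equality and $\Tr(Q_1 \otimes \cdots \otimes Q_n) = \prod_i \Tr(Q_i)$ in the last. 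Taking the supremum over POVMs gives $P_g(X|\rho_E) \leq \prod_i P_g(X_i|\rho_{E_i})$, matching the achievability bound.

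The main obstacle is the tensor-product positivity step: one must be careful that although each individual inequality $Q_i \geq \tfrac{1}{2}\rho_{E_i}^{x_i}$ is straightforward, combining them across registers requires the telescoping identity above rather than a naive claim that the tensor product preserves operator inequalities in general. Everything else---existence of the dual witnesses from SDP strong duality, and the trace computation---is routine.
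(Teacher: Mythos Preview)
Your proposal is correct and follows essentially the same approach as the paper: both formulate the single-register discrimination problem as an SDP, use strong duality to obtain an optimal dual witness $Q_i$, and then take the tensor product $Q = \bigotimes_i Q_i$ as a dual-feasible point for the joint problem to conclude $\hat{p}_* \leq \prod_i p_i$. Your telescoping argument for $Q \geq 2^{-n}\rho_E^x$ is in fact more explicit than the paper's one-line assertion of the same fact, and you only need weak duality for the joint problem (rather than strong duality as stated in the paper), which is a slight streamlining.
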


The last tool we need is an uncertainty relation for noisy channels
and measurements. Let $\sigma_{0,+} = \outp{0}{0}$, $\sigma_{1,+} =
\outp{1}{1}$, $\sigma_{0,\times} = \outp{+}{+}$ and $\sigma_{1,\times} =
\outp{-}{-}$ denote the BB84-states corresponding to the encoding of a bit $z
\in \01$ into basis $b \in \{+,\times\}$ (computational resp.
Hadamard basis). Let $\sigma_+ = (\sigma_{0,+} + \sigma_{1,+})/2$
and $\sigma_\times = (\sigma_{0,\times} + \sigma_{1,\times})/2$.
Consider the state ${\cal S}(\sigma_{z,b})$ for
some super-operator ${\cal S}$.
Note that $P_g(X|{\cal S}(\sigma_{b}))$ (see Lemma~\ref{lemma:individual})
denotes the maximal average success probability for guessing a uniformly distributed $X$ when $b=+$ or
$b=\times$. An uncertainty relation for such success probabilities can
be stated as
\begin{equation}
P_g(X|{\cal S}(\sigma_+)) \cdot P_g(X|{\cal S}(\sigma_\times)) \leq \Delta({\cal S})^2,
\label{eq:uncert}
\end{equation}
where $\Delta$ is a function from the set of superoperators to the real numbers.
For example, when ${\cal S}$ is a quantum measurement ${\cal M}$
mapping the state $\sigma_{z,b}$ onto purely classical information
it can be argued (e.g.~by using a purification argument and
Corollary 4.15 in \cite{chris:diss}) that $\Delta({\cal M}) \equiv
\frac{1}{2}(1+2^{-1/2})$ which can be achieved by a measurement in
the Breidbart basis,
where the Breidbart basis is given by $\{\ket{0}_B,\ket{1}_B\}$ with $\ket{0}_B =
\cos(\pi/8)\ket{0} + \sin(\pi/8) \ket{1}$ and $\ket{1}_B =
\sin(\pi/8) \ket{0} - \cos(\pi/8) \ket{1}$.

It is clear that for a unitary superoperator $U$ we have
$\Delta(U)^2=1$ which can be achieved. It is not hard to show that
(see the proof in the Appendix)
\begin{lemma}\label{lem:iso}
The only superoperators ${\cal S}\colon \mathbb{C}_2 \rightarrow \mathbb{C}_k$ for which
\begin{equation}
P_g(X|{\cal S}(\sigma_+)) \cdot P_g(X|{\cal S}(\sigma_{\times}))=1,
\end{equation}
are reversible operations.
\end{lemma}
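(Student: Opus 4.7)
The plan is to reduce the hypothesis to an operator-level orthogonality of output states, dilate $\mathcal{S}$ to a Stinespring isometry, and exploit the resulting rigidity to construct an explicit reversal.

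Since $P_g(X|\rho)\le 1$ for any ensemble, the product being equal to $1$ forces both factors to equal $1$ individually. By the Helstrom bound, $P_g=1$ for the uniform binary ensemble $\{(\tfrac12,\mathcal{S}(\sigma_{0,b})),(\tfrac12,\mathcal{S}(\sigma_{1,b}))\}$ is equivalent to $\mathcal{S}(\sigma_{0,b})\perp \mathcal{S}(\sigma_{1,b})$ on the output register $B$, both for $b=+$ and $b=\times$. Next I would fix a Stinespring dilation $V\colon \Complex_2\to \Complex_k\otimes \hil_E$ with $\mathcal{S}(\rho)=\Tr_E(V\rho V^\dagger)$, and set $\ket{\phi_c}=V\ket{c}$, $\ket{\phi_\pm}=(\ket{\phi_0}\pm\ket{\phi_1})/\sqrt{2}$, with reductions $\rho_c=\Tr_E\ketbra{\phi_c}{\phi_c}$ and $\rho_\pm=\Tr_E\ketbra{\phi_\pm}{\phi_\pm}$. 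Schmidt-decomposing $\ket{\phi_0}=\sum_i \sqrt{p_i}\ket{a_i}\ket{e_i}$ and $\ket{\phi_1}=\sum_j \sqrt{q_j}\ket{b_j}\ket{f_j}$, the first orthogonality $\rho_0\perp \rho_1$ immediately forces $\spann\{\ket{a_i}\}\perp\spann\{\ket{b_j}\}$ in $B$.

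The main technical step is to exploit the second orthogonality, $\rho_+\perp \rho_-$. Let $C:=\Tr_E\ketbra{\phi_0}{\phi_1}$; in the decomposition $B=B_0\oplus B_1\oplus B_\perp$ induced by the supports of $\rho_0,\rho_1$, one has $\rho_\pm=\tfrac12(\rho_0+\rho_1)\pm\tfrac12(C+C^\dagger)$, with $C$ mapping $B_1\to B_0$. Multiplying out $\rho_+\rho_-=0$ block by block yields the three identities $CC^\dagger=\rho_0^{\,2}$, $C^\dagger C=\rho_1^{\,2}$, and $\rho_0 C=C\rho_1$. A polar-decomposition argument on $C$ then produces a partial isometry that conjugates $\rho_1$ into $\rho_0$, forces the Schmidt spectra $\{p_i\}$ and $\{q_j\}$ to agree as multisets, and---after pairing eigenvalues by a suitable bijection $\pi$ and absorbing phases into the Schmidt bases---yields a common environment basis, so that
\[
\ket{\phi_0}=\sum_i\sqrt{p_i}\ket{a_i}\ket{e_i},\qquad \ket{\phi_1}=\sum_i\sqrt{p_i}\ket{b'_{\pi(i)}}\ket{e_i}.
\]
Extracting this rigid simultaneous Schmidt structure from the pair of operator orthogonality conditions, and making the argument robust against spectral degeneracies of $\rho_0,\rho_1$, is the step I expect to be the main obstacle.

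Given this structure the reversal is explicit. Define an isometry $R$ on $B_0\oplus B_1$ by $R\ket{a_i}=\ket{0}\ket{i}$ and $R\ket{b'_{\pi(i)}}=\ket{1}\ket{i}$, taking values in $\Complex_2\otimes \hil_{E'}$, and extend arbitrarily to an isometry on all of $B$. Then $(R\otimes I_E)V\ket{c}=\ket{c}\otimes \ket{\xi}$ for the fixed ancilla vector $\ket{\xi}=\sum_i\sqrt{p_i}\ket{i}_{E'}\ket{e_i}_E$, so the CPTP map $\mathcal{R}(\sigma)=\Tr_{E'}(R\sigma R^\dagger)$ satisfies $\mathcal{R}\circ \mathcal{S}=\mathrm{id}$, establishing that $\mathcal{S}$ is reversible.
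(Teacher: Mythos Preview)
Your proposal is correct and shares the paper's overall strategy—Helstrom forces orthogonal supports in both bases, a Stinespring dilation converts this into constraints on the purifications, the $\times$-basis constraint then pins down equal Schmidt spectra and a paired orthonormal block structure, and an explicit reversal is read off—but the key technical step is handled differently. The paper writes the two purifications $\ket{\phi_{0,+}},\ket{\phi_{1,+}}$ in a \emph{common} ancilla basis from the outset, computes $\|\mathcal{S}(\sigma_{0,\times})-\mathcal{S}(\sigma_{1,\times})\|_{\tr}$ explicitly, bounds it by $2\sum_k\sqrt{p_kq_k}$, and from saturation reads off $p_k=q_k$ together with mutual orthogonality of the two-dimensional blocks $S_k=\spann\{\ket{\psi_k},\ket{\psi_k^\perp}\}$; its reversal is then simply ``project onto $S_k$, rotate $\ket{\psi_k},\ket{\psi_k^\perp}$ back to $\ket{0},\ket{1}$.'' Your block-matrix route ($\rho_+\rho_-=0\Rightarrow CC^\dagger=\rho_0^{2}$, $C^\dagger C=\rho_1^{2}$, $\rho_0C=C\rho_1$) followed by a polar decomposition of $C$ reaches the same equal-spectra conclusion and \emph{derives} the common environment basis rather than assuming it, which is precisely the point the paper's argument leaves implicit. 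Your version is a bit heavier in machinery but is more transparent about the degeneracy issue you flag; the paper's argument is shorter but sweeps that same subtlety under the rug. The two reversal constructions are equivalent: your isometry $R$ restricted to each block $\spann\{\ket{a_i},\ket{b'_{\pi(i)}}\}$ is exactly the paper's ``project onto $S_k$ then rotate.''
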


\section{Protocol and Analysis}
We use $\in_R$ to denote the uniform choice of an element from a set.
We further use $x_{|\setT}$ to denote the string $x=x_1,\ldots,x_n$
restricted to the bits indexed by the set $\setT \subseteq
\{1,\ldots,n\}$.  For convenience, we take $\{+,\times\}$ instead of
$\01$ as domain of Bob's choice bit $C$ and denote by $\overline{C}$ the
bit different from $C$.
\begin{protocol}[\cite{serge:new}]1-2 $\mbox{ROT}^\ell(C,T)$
\begin{enumerate}
\item Alice picks $X \in_R \01^n$ and $\Theta \in_R \{+,\times\}^n$.
  Let $\setI_b = \{i\mid \Theta_i = b\}$ for $b \in \{+,\times\}$. At
  time $t=0$, she sends $\sigma_{X_1,\Theta_1} \otimes \ldots \otimes
  \sigma_{X_n,\Theta_n}$ to Bob.
\item Bob measures all qubits in the basis corresponding to his choice
  bit $C \in \{+,\times\}$.  This yields outcome $X' \in \01^n$.
\item Alice picks two hash functions $F_+,F_\times \in_R
  \setF$, where $\setF$ is a class of two-universal hash functions. At time $t=T$, she sends $\setI_+$,$\setI_\times$, $F_+$,$F_\times$ to Bob.
Alice outputs $S_+ = F_+(X_{|\setI_+})$ and $S_{\times}
  = F_\times(X_{|\setI_\times})$ \footnote{If $X_{|\setI_b}$ is less
    than $n$ bits long Alice pads the string $X_{|\setI_b}$ with 0's
    to get an $n$ bit-string in order to apply the hash function to
    $n$ bits.}.
\item Bob outputs $S_C = F_C(X'_{|\setI_C})$.
\end{enumerate}
\end{protocol}

\subsection{Analysis}
We first show that this protocol is secure according to
Definition~\ref{def:ROT}.

(i) correctness: It is clear that the protocol is correct.  Bob can
determine the string $X_{|\setI_C}$ (except with negligible
probability $2^{-n}$ the set ${\cal I}_C$ is non-empty) and hence
obtains $S_C$.

(ii) security against dishonest Alice: this holds in the same way as
shown in~\cite{serge:new}. As the protocol is non-interactive, Alice
never receives any information from Bob at all, and Alice's input
strings can be extracted by letting her interact with an unbounded
receiver.

(iii) security against dishonest Bob: Our goal is to show that there
exists a $C' \in \{+,\times\}$ such that Bob is completely ignorant
about $S_{\ol{C'}}$. In our model Bob's collective storage cheating
strategy can be described by some super-operator
$\mS=\bigotimes_{i=1}^n \mS_i$ that is applied on the qubits between
the time they arrive at Bob's and the time $T$ that Alice sends the
classical information.
We define the choice bit $C'$ as a fixed function of $\mS$. 
Formally, we set $C' \equiv +$ if
$\prod_{i=1}^n P_g(X_i|\mS_i(\sigma_+)) \geq \prod_{i=1}^n
P_g(X_i|\mS_i(\sigma_{\times}))$ and $C' \equiv \times$ otherwise.

Due to the uncertainty relation for each ${\cal S}_i$ (from
Eq.~(\ref{eq:uncert})) it then holds that $\prod_i P_g(X_i|{\cal
S}_i(\sigma_{\ol{C'}})) \leq \prod_i \Delta({\cal S}_i) \leq
(\Delta_{\rm max})^n$ where $\Delta_{\rm max} \assign \max_{i} \Delta({\cal
S}_i)$. This will be used in the proof below.

In the remainder of this section, we show that the non-uniformity
$\delta_{\rm sec} \assign d(S_{\ol{C'}}|S_{C'}C'\rho_B)$ is negligible
in $n$ for collective attacks. Here $\rho_B$ is the complete quantum
state of Bob's lab at the end of the protocol including the classical
information $\setI_+, \setI_{\times}, F_+, F_{\times}$ he got from
Alice and his quantum information $\bigotimes_{i=1}^n
\mS_i(\sigma_{X_i,\Theta_i})$. Expressing the non-uniformity in terms
of the trace-distance allows us to observe that $\delta_{\rm
  sec}=2^{-n} \sum_{\theta \in \{+,\times\}^n} d(S_{\ol{C'}} |
\Theta=\theta , S_{C'} C' \rho_B)$.  Now, for fixed $\Theta=\theta$,
it is clear from the construction that $S_{C'},C',F_{C'}$ and
$\bigotimes_{i \in \setI_{C'}} \mS_i(\sigma_{X_i,C'})$ are independent
of $S_{\ol{C'}}=F_{\ol{C'}}(X_{|\setI_{\ol{C'}}})$ and we can use
Eq.~(\ref{eq:indep}).  Hence, one can bound the non-uniformity as in
Lemma~\ref{simplifiedPA}, i.e.~by the square-root of the probability
of correctly guessing $X_{|_{\setI_{\ol{C'}}}}$ given the state
$\bigotimes_{i \in \setI_{\ol{C'}}} \mS_i (\sigma_{X_i,\ol{C'}})$.
Lemma~\ref{lemma:individual} tells us that to guess $X$, Bob can
measure each remaining qubit individually and hence we obtain
\begin{align*}
\delta_{\rm sec} &\leq 2^{\frac{\ell}{2}-1} \cdot 2^{-n} \!\!\! \sum_{\theta \in \{+,\times\}^n}
\sqrt{ \prod_{i \in \setI_{\ol{C'}}} P_g(X_i | \mS_i(\sigma_{\ol{C'}}) )}\\
&\leq 2^{\frac{\ell}{2}-1} \sqrt{ 2^{-n} \sum_{\theta \in \{+,\times\}^n}
\prod_{i \in \setI_{\ol{C'}}} P_g(X_i | \mS_i(\sigma_{\ol{C'}}) )}\\
&= 2^{\frac{\ell}{2}-1} \sqrt{ 2^{-n} \prod_{i=1}^n \big(1+ P_g(X_i |
  \mS_i(\sigma_{\ol{C'}}) ) \big)} \, ,
\end{align*}
where we used the concavity of the square-root function in the last
inequality.  Lemma~\ref{lem:bestdistribution} together with the bound
$\prod_i P_g(X_i|{\cal S}_i(\sigma_{\ol{C'}})) \leq (\Delta_{\rm max})^n$
lets us conclude that
\[ \delta_{\rm sec} \leq 2^{\frac{\ell}{2}-1} \cdot (\Delta_{\rm
  max})^{\frac{\log(4/3)}{2} n} \, .
\]

Lemma~\ref{lem:iso} shows that for essentially any noisy superoperator $\Delta({\cal S}) < 1$.
This shows that for any collective attacks there exists an $n$ which
yields arbitrarily high security.

\subsection{Example}
Let us now consider the security in an explicit example: a noisy depolarizing channel.
In order to explicitly bound $\Delta(\mS_i)$ we should allow for intermediate strategies of Bob in
which he partially measures the incoming qubits leaving some quantum information undergoing depolarizing noise.
To model this noise we let $\mS_i =\mN
\circ \mP_i$, where $\mP_i$ is any noiseless quantum operation of Bob's
choosing from one qubit to one qubit that generates some classical
output. For example, $\mP_i$ could be a partial measurement providing
Bob with some classical information and a slightly disturbed quantum
state, or just a unitary operation. Let
$$
\mN(\rho) := r \rho + (1-r)\frac{\id}{2}
$$
be the fixed depolarizing 'quantum storage' channel that Bob cannot
influence. (see Figure \ref{figure:depolModel})

\begin{figure}
\includegraphics{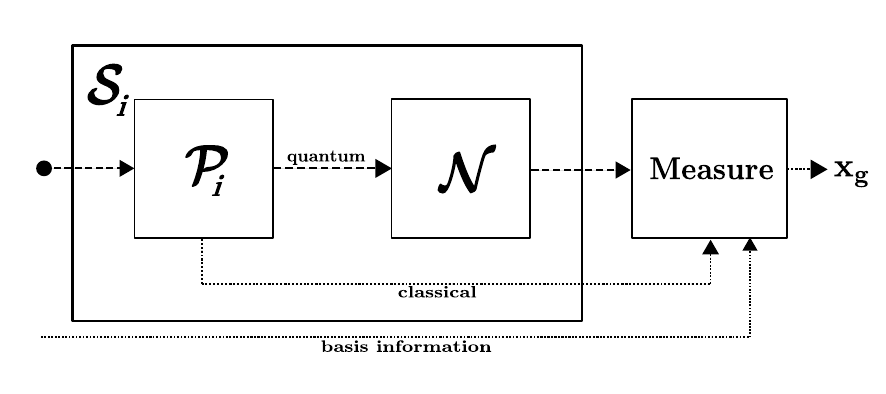}
\caption{Bob performs a partial measurement $\mP_i$, followed by noise $\mN$, and outputs a guess bit
$x_g$ depending on his classical measurement outcome, the remaining quantum state, and the
additional basis information.} \label{figure:depolModel}
\end{figure}

To determine $\delta_{\rm sec}$, we have to find an uncertainty relation
similar to Eq.~(\ref{eq:uncert}) by optimizing over all possible
partial measurements $\mP_i$
$$
\max_{\mS_i} \Delta(\mS_i)^2 = \max_{\mP_i} P_g(X|{\cal S}_i(\sigma_+)) \cdot
P_g(X|{\cal S}_i(\sigma_\times)).
$$

We solve this problem for depolarizing noise using
the symmetries inherent in our problem. In Appendix \ref{app:depol} we prove the following.
\begin{theorem} \label{thm:depolarize}
Let $\mN$ be the depolarizing channel and let $\max_{\mS_i} \Delta(\mS_i)$
be defined as above. Then
$$
\max_{\mS_i} \Delta(\mS_i) = \left\{
\begin{array}{ll}
\frac{1+r}{2} & \mbox{ for } r \geq \frac{1}{\sqrt{2}}\\
\frac{1}{2} + \frac{1}{2\sqrt{2}} & \mbox{ for } r <
\frac{1}{\sqrt{2}}
\end{array}\right.
$$
\end{theorem}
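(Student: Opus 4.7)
The plan is to bound $\max_{\mP_i}\Delta(\mN\circ\mP_i)$ from above and below separately. The lower bound is achieved by two explicit strategies that cover the two regimes of $r$; the upper bound is the hard part and will rely on the $SU(2)$-covariance of the depolarizing channel together with a convexity argument.

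For the lower bound I would exhibit two concrete choices of $\mP_i$. First, the \emph{identity} strategy $\mP_i=\mathrm{id}$: the output of $\mS_i$ is $\mN(\sigma_{x,b})=r\sigma_{x,b}+(1-r)\id/2$, which Bob measures in basis $b$ once he learns it, recovering $x$ with probability $(1+r)/2$, the same for $b=+$ and $b=\times$; hence $\Delta(\mS_i)=(1+r)/2$. Second, the \emph{Breidbart} strategy: $\mP_i$ immediately measures in the Breidbart basis and records the classical outcome. The subsequent noise acts only on the now-irrelevant quantum register, so Bob's guess depends only on the classical outcome and succeeds with probability $\tfrac12+\tfrac{1}{2\sqrt2}$ in both bases, giving $\Delta(\mS_i)=\tfrac12+\tfrac{1}{2\sqrt2}$. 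Since $(1+r)/2\ge\tfrac12+\tfrac{1}{2\sqrt2}$ iff $r\ge1/\sqrt2$, the larger of the two values matches the stated case split.

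For the upper bound, the first observation is that the objective $P_g(X\mid\mS_i(\sigma_+))\cdot P_g(X\mid\mS_i(\sigma_\times))$ is convex in the instrument $\mP_i$, so its maximum is attained at an extreme instrument, which we may take to have rank-one Kraus operators $\{A_c\}$. For each outcome $c$, the Helstrom bound expresses the conditional guessing probability as a trace norm involving $\mN(A_c\sigma_{0,b}A_c^\dagger)-\mN(A_c\sigma_{1,b}A_c^\dagger)$, and the linear extension $\mN(M)=rM+(1-r)(\tr M)\id/2$ gives this a transparent dependence on $A_c$, on $r$, and on $\sigma_{0,b}-\sigma_{1,b}\in\{Z,X\}$. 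I would then exploit the $SU(2)$-covariance of $\mN$ and the discrete symmetry of the four BB84 states (rotations generated by the Hadamard together with the Pauli operators) to twirl the optimization, reducing it to a one-parameter family of rank-one Kraus operators, indexed by the tilt angle of the measurement axis relative to the computational basis.

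The hard part is then the explicit maximization over this one-parameter family and the verification that the optimum always sits at one of two endpoints, corresponding respectively to zero tilt (recovering the identity strategy and value $(1+r)/2$) and to maximal tilt along the Breidbart axis (recovering $\tfrac12+\tfrac{1}{2\sqrt2}$), with crossover at $r=1/\sqrt2$. Intuitively, a partial measurement pays in both bases simultaneously without a compensating benefit from the depolarizing structure, so the optimum should be extremal; but making this rigorous requires both the closed-form expression produced by the symmetrization step and a direct calculus check along the reduced family, and this is where the bulk of the technical work lies.
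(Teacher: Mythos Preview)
Your lower-bound argument and the overall architecture (explicit strategies plus a symmetry-reduced upper bound) match the paper, but two steps in your upper bound do not go through as stated.

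First, the product $p_+ p_\times := P_g(X\mid\mS_i(\sigma_+))\,P_g(X\mid\mS_i(\sigma_\times))$ is \emph{not} convex in the instrument. Under the natural convex structure (randomize between two instruments and record which was used), each factor $p_b$ is in fact \emph{affine}; but a product of two affine functionals is an indefinite quadratic, so its maximum over a convex set need not sit at an extreme point. The paper sidesteps this by maximizing the \emph{sum} $\Gamma=p_++p_\times$, which is affine, and then invoking $\Delta\le\Gamma/2$ (AM--GM). Because the two candidate optima (storage and Breidbart) both satisfy $p_+=p_\times$ by the $+\leftrightarrow\times$ symmetry of the problem, AM--GM is tight there and $\max\Delta=\max\Gamma/2$ exactly.

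Second, it is not true that extreme instruments have rank-one Kraus operators: the storage strategy $\mP_i=\mathrm{id}$ is an extreme point (a unitary channel) with a single rank-two Kraus operator, and it is precisely the optimum for $r\ge 1/\sqrt2$. Restricting to rank-one Kraus operators would therefore exclude the very strategy you need. The paper instead uses the discrete Pauli group $G=\{\id,X,Z,XZ\}$ to twirl any instrument into one of the form $\{gFg^\dagger:g\in G\}$ for a single Hermitian $F$ with $\Tr F^2=\tfrac12$, parametrized as $F=\alpha\,|\phi\rangle\langle\phi|+\sqrt{\tfrac12-\alpha^2}\,|\phi^\perp\rangle\langle\phi^\perp|$. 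A Bloch-sphere computation then fixes $|\phi\rangle$ in the $XZ$-plane and on the Breidbart axis, leaving only the eigenvalue parameter $\alpha\in[0,1/\sqrt2]$; the final one-variable analysis shows the maximum lies at $\alpha=\tfrac12$ (so $F\propto\id$, storage) for $r\ge1/\sqrt2$ and at $\alpha=0$ (rank-one Breidbart projector) for $r<1/\sqrt2$. Your ``tilt angle'' corresponds to the direction of $|\phi\rangle$, but the crucial degree of freedom that interpolates between the two optimal strategies is the eigenvalue parameter $\alpha$, which your rank-one reduction throws away.
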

Our result shows that for $r < 1/\sqrt{2}$ a direct measurement $\mM$ in the Breidbart basis is the best attack Bob can perform. For this measurement,
we have $\Delta(\mM)= 1/2 +
1/(2\sqrt{2})$.
If the depolarizing noise is low ($r \geq 1/\sqrt{2}$), then our result
states that the best strategy for Bob is to simply store the qubit as is.

\section{Practical Oblivious Transfer}

In this section, we prove the security of a ROT protocol that is
robust against noise for the honest parties. Our protocol is thereby
a small modification of the protocol considered
in~\cite{chris:diss}. Note that for our analysis, we have to assume
a worst-case scenario where a dishonest receiver Bob has access to a
perfect noise-free quantum channel and only experiences noise during
storage. First, we consider erasure noise (in practice corresponding
to photon loss) during preparation, transmission and measurement of
the qubits by the honest parties. Let $1-p_{\rm erase}$ be the total
probability for an honest Bob to measure and detect a photon in the
$\{+,\times\}$ basis given that an honest Alice prepares a weak
pulse in her lab and sends it to him. The probability $p_{\rm
erase}$ is determined among others by the mean photon number in the
pulse, the loss on the channel and the quantum efficiency of the
detector. In our protocol we assume that the (honest) erasure rate
$p_{\rm erase}$ is {\em independent} of whether qubits were encoded
or measured in the $+$- or $\times$-basis. This assumption is
necessary to guarantee the correctness and the security against a
cheating \emph{Alice} only. Fortunately, this assumption is well matched with
physical capabilities.

Any other noise source during preparation, transmission and
measurement can be characterized as an effective classical noisy
channel resulting in the output bits $X'$ that Bob obtains at
Step~\ref{step:reception} of Protocol~\ref{prot:practical}.  For simplicity, we model
this compound noise source as a classical binary symmetric channel
acting independently on each bit of $X$. Typical noise sources for
polarization-encoded qubits are depolarization during transmission,
dark counts in Bob's detector and misaligned polarizing
beam-splitters. Let the effective bit-error probability of this binary
symmetric channel be $p_{\rm error} < 1/2$.

Before engaging in the actual protocol, Alice and Bob agree on the
system parameters $p_{\rm erase}$ and $p_{\rm error}$ similarly to
Step~1 of the protocol in~\cite{crepeau:practicalOT}. Furthermore,
they agree on a family $\set{C_n}$ of linear error correcting codes
of length $n$ capable of efficiently correcting $n \cdot p_{\rm
error}$ errors. For any string $x \in \set{0,1}^n$, error correction
is done by sending the syndrome information $syn(x)$ to Bob from
which he can correctly recover $x$ if he holds an output $x' \in
\set{0,1}^n$ obtained by flipping each bit of $x$ independently with
probability $p_{\rm error}$. It is known that for large enough $n$,
the code $C_n$ can be chosen such that its rate is arbitrarily close
to $1-h(p_{\rm error})$ and the syndrome length (the number of parity check bits) are
asymptotically bounded by $|syn(x)| < h(p_{\rm error})
n$~\cite{crepeau:efficientOT}, where $h(p_{\rm error})$ is the binary Shannon
entropy. We assume the players have synchronized clocks. In each
time slot, Alice sends one qubit (laser pulse) to Bob.

\begin{protocol}Noise-Protected Photonic 1-2 $\mbox{ROT}^\ell(C,T)$ \label{prot:practical}
\begin{enumerate}
\item Alice picks $X \in_R \01^n$ and $\Theta \in_R \{+,\times\}^n$.
\item For $i=1,\ldots,n$: In time slot $t=i$, Alice sends
  $\sigma_{X_i,\Theta_i}$ as a phase- or polarization-encoded weak pulse
  of light to Bob.
\item \label{step:reception} In each time slot, Bob measures the
  incoming qubit in the basis corresponding to his choice bit $C \in
  \{+,\times\}$ and records whether he detects a photon or not.  He
  obtains some bit-string $X' \in \01^m$ with $m \leq n$.
\item Bob reports back to Alice in which time slots he received a
  qubit. Alice restricts herself to the set of $m < n$ bits that Bob
  did not report as missing. Let this set of qubits be $S_{\rm
    remain}$ with $|S_{\rm remain}|=m$.
\item \label{step:alicecheck} Let $\setI_b = \{i \in S_{\rm
    remain}\mid \Theta_i = b\}$ for $b \in \{+,\times\}$ and let
  $m_b=|\setI_b|$.  Alice aborts the protocol if either $m_+$ or
  $m_\times \leq (1-p_{\rm erase})n/2-O(\sqrt{n})$.
  If this
  is not the case, Alice picks two two-universal hash functions
  $F_+,F_\times \in_R \setF$. At time $t=n+T$, Alice sends
  $\setI_+$,$\setI_\times$, $F_+$,$F_\times$, and the syndromes $syn(X_{|\setI_+})$
  and $syn(X_{|\setI_\times})$ according to codes of appropriate length $m_b$ to Bob.  Alice outputs $S_+ =
  F_+(X_{|\setI_+})$ and $S_{\times} = F_\times(X_{|\setI_\times})$.
\item Bob uses $syn(X_{|\setI_C})$ to correct the errors on his output $X'_{|\setI_C}$. He obtains the corrected bit-string
$X_{\rm cor}$ and outputs $S'_C = F_C(X_{\rm cor})$.
\end{enumerate}
\end{protocol}

Let us consider the security and correctness of this modified protocol.\\
(i) correctness: By assumption, $p_{\rm erase}$ is
independent of the basis in which Alice sent the qubits.
Thus, $S_{\rm remain}$ is with high
probability a random subset of $m \approx (1-p_{\rm erase})n \pm
O(\sqrt{n})$ qubits independent of the value of $\Theta$. This
implies that in Step~\ref{step:alicecheck} the protocol is aborted with a probability
exponentially small in $m$, and hence in $n$. The codes are chosen
such that Bob can decode except with negligible probability. These
facts imply that if both parties are honest the protocol is correct
(i.e. $S_C=S'_C$) with exponentially small
probability of error. \\
(ii) security against dishonest Alice: Even though in this scenario Bob {\em does}
communicate to Alice, the information stating which qubits were erased is (by assumption) independent of the basis in which he
measured and thus of his choice bit $C$. Hence Alice does not learn
anything about his choice bit $C$. Her input strings can be extracted as in Protocol~1. \\
(iii) security against dishonest Bob: First of all, we note that Bob
can always make Alice abort the protocol by reporting back an
insufficient number of received qubits. If this is not the case,
then we define $C'$ as in the analysis of Protocol~1 and we need to
bound the non-uniformity $\delta_{\rm sec}$ as before. Let us for
simplicity assume that $m_b=m/2$ (this is true with high
probability, modulo $O(\sqrt{n})$ factors which become negligible in
the security for large $n$) with $m \approx (1-p_{\rm erase})n$
We now follow through the same analysis, where we restrict ourselves
to the set of remaining qubits.
We first follow through the same steps simplifying the non-uniformity
using that the total attack superoperator ${\cal S}$ is a product of
superoperators. Then we use the bound in Lemma~\ref{simplifiedPA}
for each $\theta \in \{+,\times\}^n$ where we now have to condition on the additional information
$syn(X_{|\setI_{\ol{C'}}})$ which is $m h(p_{\rm error})/2$ bits long.
Using Eq. (\ref{eq:addinfo}) and following identical steps in the
remainder of the proof implies
\begin{equation}
\delta_{\rm sec} \leq  2^{\frac{\ell}{2}-1 + h(p_{\rm error}) \frac{m}{4}}
(\Delta_{\rm
  max})^{\frac{\log(4/3)}{2} m} \, .
  \label{eq:sec}
\end{equation}
From this expression it is clear that the security depends crucially
on the value of $\Delta_{\rm max}$ versus the binary entropy $h(p_{\rm error})$.
The trade-off in our bound is not extremely favorable for security
as we will see.

\subsection{Depolarizing noise}

We first consider again the security tradeoff when Bob's storage is
affected by depolarizing noise, and additionally the channel itself is
subject to depolarizing noise. Let us assume that $r <
1/\sqrt{2}$ for the storage noise.  According to
Theorem~\ref{thm:depolarize}, Bob's \emph{optimal} attack is to
measure each qubit individually in the Breidbart basis.  In this case,
our protocol is secure as long as $h(p_{\rm error}) < 2 \log(\frac12 +
\frac{1}{2 \sqrt{2}}) \log(3/4)$.  Hence, we require that $p_{\rm
  error} \lessapprox 0.029$.  This puts a strong restriction on the
noise rate of the honest protocol. Yet, since our protocols are
particularly interesting at short distances (e.g. in the case of
secure identification), we can imagine very short free-space
implementations such that depolarization noise during transmission is
negligible and the main depolarization noise source is due to Bob's
honest measurements.

In the near-future we may anticipate that storage is better than direct measurement when good photonic memories
become available (\cite{julsgaard+:mem, boozer+:qmem, chaneliere+:qmem, eisaman+:qmem, rosenfeld+:qmem,
  pittman_franson:memory}).
However, we are free in our protocol to stretch the waiting time $T$ between Bob's reception of the qubits and his reception of the classical basis information, say, to seconds, which means
that one has to consider the overall noise rate on a qubit that is stored for seconds. Clearly, there is 
a strict tradeoff between the noise $p_{\rm error}$ on the channel experienced by the honest
parties, and the noise experienced by dishonest Bob.

For $r \geq 1/\sqrt{2}$ (when storage is better than the Breidbart attack) we also obtain a tradeoff
involving $r$. Suppose that the qubits in the honest protocol are also subjected to depolarizing
noise at rate $1-r_{\rm honest}$. The effective classical error rate for a depolarizing channel
is then simply $p_{\rm error}=(1-r_{\rm honest})/2$. Thus we can consider when the function
$h(p_{\rm error})/4+\log(\frac{1+r}{2})\log(4/3)/2$ goes below 0. If we assume that
$r_{\rm honest} = a r \leq 1$, for some scaling factor $1 \leq a \leq 1/r$
(i.e., the honest party never has more noise than the dishonest party),
we obtain a clear tradeoff between $a$ and $r$ depicted in Figure~\ref{figure:tradeoff}.

\begin{figure}
\includegraphics[scale=0.5]{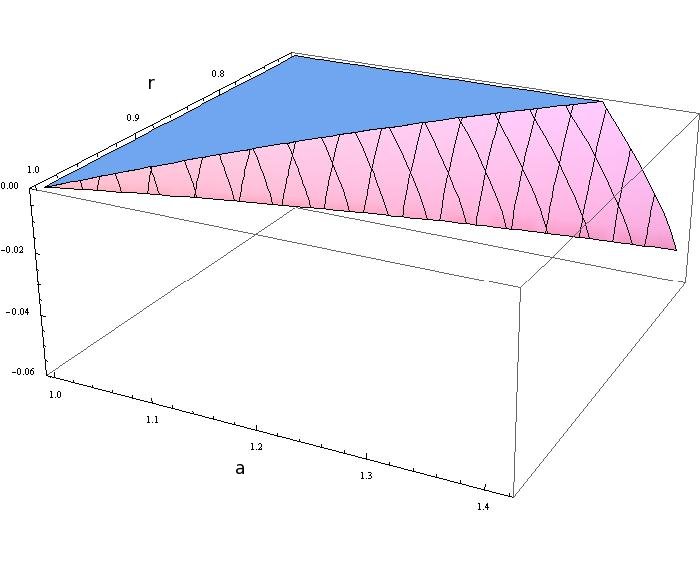}
\caption{
$h((1 - a r)/2)/4+\log(\frac{1+r}{2})\log(4/3)/2$, where we only show the region below 0, i.e.,
where security can be attained.}
\label{figure:tradeoff}
\end{figure}

\subsection{Other Attacks}
In a practical setting, other attacks may be possible which are not
captured by the model we used when analyzing depolarizing noise. For
example, attacks that relate to the protocol being implemented with
weak coherent states.
We discuss the affect of such practical problems in this section, but do not
claim to prove security of the practical protocol in full
generality. Instead, we merely discuss several practical attacks that
a dishonest Bob may mount.

Let us consider the security threat that comes from using coherent
weak laser pulses. For a mean photon number $\mu$, the probability to
have more than one photon in the beam is $P(k > 1) \approx \mu/2$
\cite{GRTZ:qkd_review}, where $k$ is the number of photons and $P(k)$
is the probability of $k$ photons in the beam with mean photon number $\mu$. 
In principle, this implies that Bob can measure
in both bases with probability $\mu/2$ (and he knows when this
occurs). If with remaining probability $1-\mu/2$ he is able to do a
measurement in the Breidbart basis, then for such attack we have
$\Delta_{\rm bm}=\mu/2+(1-\mu/2)(1/2+1/(2
\sqrt{2}))=1/2+1/(2\sqrt{2})+\mu(1-1/\sqrt{2})/4$.

Another attack is the following. Upon reception of his qubits Bob
tries to beam-split each incoming pulse and measure the outgoing
modes in both bases. In case he does not succeed he would like to
declare erasures. In Step~5 of the protocol Alice aborts the
protocol when Bob declares too many erasures: in principle, this 
can prevent Bob from making the protocol completely unsafe with this attack. Such a
beam-splitting attack does however put another constraint on the
region of error rates where one can have security using Eq.~\eqref{eq:sec}. Let us sketch the security bound for this particular attack.
Among the $m=(1-p_{\rm erase}) n$ remaining time slots, Bob will have $P(k > 1)p_{\rm
beamsplit} n \approx n\mu/4$ slots where he gets two or more
photons and measures them successfully in both bases (assuming perfect detector
efficiency), where $p_{\rm beamsplit} = 1/2$. 
For these slots, $\Delta=1$ so they do not enter the
security bound. For the $n(1-p_{\rm erase}-\mu/4)$ remaining time slots, he
is in a situation similar to before. Let us assume that the erasure
rate $p_{\rm erase}\approx P(k=0)+P(k \geq 1)p_{\rm no detect}$ where
$p_{\rm no detect}$ is the probability that Bob does not detect a
photon with his devices. Since the probability of
emitting a very large number of photons is small, we approximate the true
value by letting $p_{\rm no detect}$ be independent of $k$.
We have
$P(0)=e^{-\mu} \approx 1-\mu$ for small $\mu$ and thus $n((1-p_{\rm
erase})-\mu/4)=n \mu (p_{\rm detect}-1/4)$. In principle, this leads
to a bound as in Eq.~(\ref{eq:sec}).
However, security remains to be analyzed rigorously, and one needs
to determine Bob's optimal cheating strategy.
If single photon sources were used, such attacks could be
excluded.

In our analysis, we assumed that Alice and Bob can reliably establish a
bound on $p_{\rm erase}$. However $p_{\rm erase}$ may contain a
sizable contribution from the quantum efficiency of the detectors used
by Bob and a dishonest receiver may cheat by using better detectors
than he tells Alice during the error estimation process. 
For example, in the extreme case he could
convince Alice that his devices are so bad that of the $n$ inputs he
can detect a photon only in $\mu n/4$ cases. If instead he has perfect
devices and measures two photons successfully in both bases $\mu n/4$
times, he made the protocol completely insecure. Thus we assume in our
protocol that Alice can establish a reliable and reasonable lower
bound on $p_{\rm erase}$.


For current and near-future implementations we note that an
important practical limitation on Bob's attacks is the following.
Since a photon measurement is destructive with current technology,
Bob cannot store his qubits while at the same time reporting
correctly which ones were erased. So if Bob wants to store his
qubits, he has to guess which qubits were erased. This implies
that among the set of qubits in the set ${\cal I}_b$ approximately
$p_{\rm erase}m_b$ are in fact erased. For an erasure channel
with rate $p_{\rm erase}$ it is simple to show that $\Delta(\mS_{\rm
erase})=1-p_{\rm erase}/2$. Since erasure rates can easily be high
(due to small $\mu$ and other sources of photon loss), say of
$O(10^{-1})$, this limits the threat of a storage attack within the
current technology setting.

\subsection{Fault-tolerant computation}
Let us discuss the long-term security when {\em fault-tolerant}
photonic computation would become available (with the KLM scheme
\cite{KLM:lo} for example). In such a scenario dishonest Bob can
encode the incoming quantum information into a fault-tolerant quantum
memory. This implies that in storage, the effective noise rate can be
made arbitrarily small. However, the encoding of a single unknown state
is not a fault-tolerant quantum operation: already the encoding process introduces
errors whose rates cannot be made arbitrarily small with increasing
effort. Hence, even in the presence of a quantum computer, there is
a residual storage noise rate due to the unprotected encoding
operation. The question of security then becomes a question of a
trade-off between this residual noise rate versus the intrinsic noise
rate. Our current security bound is too weak though, to show security
in such scenario.

\section{Conclusion}
We have determined security bounds for a perfect and a practical ROT
protocol given collective storage attacks by Bob.  Ideally, we would
like to be able to show security against general coherent noisy
attacks.  The problem with analyzing a coherent attack of Bob
described by some super-operator ${\cal S}$ affecting all his incoming
qubits is not merely a technical one: one first needs to determine a
realistic noise model in this setting.  It may be possible using de
Finetti theorems as in the proof of QKD \cite{renato:diss} to prove
for a symmetrized version of our protocol that any coherent attack by
Bob is equivalent to a collective attack. One can in fact analyze a
specific type of coherent noise, one that essentially corresponds to
an eavesdropping attack in QKD. Note that the 1-2 OT protocol can be
seen as two runs of QKD interleaved with each other. The strings
$f(x_{|\setI_+})$ and $f(x_{|\setI_\times})$ are then the two keys
generated. The noise must be such that it leaves Bob with exactly the
same information as the eavesdropper Eve in QKD. In this case, it
follows from the security of QKD that the dishonest Bob (learning
exactly the same information as the eavesdropper Eve) does not learn
anything about the two keys.

It is an important open question whether it is possible to derive
security bounds (or find a better OT protocol) which give better
trade-offs between noise in the honest protocol and noise induced by
dishonest Bob. Finally, it remains to address composability of the
protocol within our model, which has already been considered for the
bounded-quantum-storage model~\cite{WW07:compose}.

\acknowledgments We thank Charles Bennett, David DiVincenzo, Renato
Renner and Falk Unger for interesting discussions and Ronald de Wolf
for suggestions regarding Lemma~\ref{lem:bestdistribution}. We are
especially grateful 
to Hoi-Kwong Lo for bringing up attacks that relate to
the use of weak laser pulses in the practical OT protocol.
This work was completed while SW was a PhD student at CWI, Amsterdam, 
Netherlands.
CS and SW were supported by EU fifth framework project QAP IST 015848
and the NWO VICI project 2004-2009. BMT acknowledges support by DTO
through ARO contract number W911NF-04-C-0098. SW thanks IBM Watson
and BMT thanks the Instituut Lorentz in Leiden for their kind
hospitality. At both locations part of this work were completed.
\bibliographystyle{apsrev}

\appendix

\section{Tools}
In this appendix, we prove the lemmas used in the main text.
The statements are reproduced for convenience.\\[2mm]
\noindent {\bf Lemma~\ref{lemma:individual} }{\it
Let $\rho_{XE}$ be a cq-state with uniformly distributed $X \in \01^n$ and $\rho_E^x =
\rho_{E_1}^{x_1} \otimes \ldots \otimes \rho_{E_n}^{x_n}$.
Then the maximum probability of guessing $x$ given state $\rho_E$ is
$P_g(X|\rho_E)  = \Pi_{i=1}^n P_g(X_i|\rho_{E_i})$,
which can be achieved by measuring each register separately.}\\[2mm]
\begin{proof}
  For simplicity, we will assume that each bit is encoded using the
  same states $\rho_0 = \rho_{E_i}^0$ and $\rho_1 = \rho_{E_i}^1$.  The argument
  for different encodings is analogous, but harder to read.  First of
  all, note that we can phrase the problem of finding the optimal
  probability of distinguishing two states as a semi-definite program
  (SDP)
\begin{sdp}{maximize}{$\frac{1}{2}\left(\Tr(M_0\rho_0) + \Tr(M_1\rho_1)\right)$}
&$M_0,M_1 \geq 0$\\
&$M_0 + M_1 = \id$
\end{sdp}
with the dual program
\begin{sdp}{minimize}{$\frac{1}{2}\Tr(Q)$}
&$Q \geq \rho_0$\\
&$Q \geq \rho_1$.
\end{sdp}
Let $p_*$ and $d_*$ denote the optimal values of the primal and dual
respectively. From the weak duality of SDPs, we have $p_* \leq d_*$.
Indeed, since $M_0,M_1 = \id/2$ are feasible solutions, we even have
strong duality: $p_* = d_*$ \cite{VB:sp}.

Of course, the problem of determining the entire string $x$ from
$\hat{\rho}_x \assign \rho_E^x$ can
also be phrased as a SDP:
\begin{sdp}{maximize}{$\frac{1}{2^n} \sum_{x \in \01^n} \Tr(M_x\hat{\rho}_x)$}
&$\forall x, M_x \geq 0$\\
&$\sum_{x \in \01^n} M_x = \id$
\end{sdp}
with the corresponding dual
\begin{sdp}{minimize}{$\frac{1}{2^n} \Tr(\hat{Q})$}
&$\forall x, \hat{Q} \geq \hat{\rho}_x$.
\end{sdp}
Let $\hat{p}_*$ and $\hat{d}_*$ denote the optimal values of this new primal and dual
respectively. Again, $\hat{p}_* = \hat{d}_*$.

Note that when trying to learn the entire string $x$, we are of course free
to measure each register individually and thus $(p_*)^n \leq \hat{p}_*$. We now
show that $\hat{d}_* \leq (d_*)^n$ by constructing a dual solution $\hat{Q}$ from the optimal
solution to the dual of the single-register case, $Q_*$:
Take $\hat{Q} = Q_*^{\otimes n}$. Since $Q_* \geq \rho_0$ and $Q_* \geq \rho_1$
it follows that $\forall x, Q_*^{\otimes n} \geq \hat{\rho}_x$. Thus $\hat{Q}$ is satisfies
the dual constraints. Clearly, $2^{-n} \Tr(\hat{Q}) = (2^{-1} \Tr(Q_*))^n$ and thus we
have $\hat{d}_* \leq (d_*)^n$ as promised. But from $(p_*)^n \leq \hat{p}_*$,
$\hat{p}_* = \hat{d}_*$, and $p_* = d_*$ we immediately have $\hat{p}_* = (p_*)^n$.
\end{proof}

\noindent {\bf Lemma~\ref{lem:iso} }{\it
The only superoperators ${\cal S}\colon \mathbb{C}_2 \rightarrow \mathbb{C}_k$ for which
\begin{equation}
P_g(X|{\cal S}(\sigma_+)) \cdot P_g(X|{\cal S}(\sigma_{\times}))=1,
\end{equation}
are reversible.} \\[2mm]
\begin{proof}
Using Helstrom's formula \cite{helstrom:detection} we have that
$P_g(Z|\mS(\sigma_b))=\frac{1}{2}[1+||{\cal S}(\sigma_{0,b})-{\cal S}(\sigma_{1,b})||_{\tr}/2]$ and thus for $\Delta({\cal S})=1$
we need that for both $b \in \{\times, +\}$, $||{\cal S}(\sigma_{0,b})-{\cal S}(\sigma_{1,b})||_{\tr}/2=1$. This implies that
${\cal S}(\sigma_{0,b})$ and ${\cal S}(\sigma_{1,b})$ are states which have support on orthogonal
sub-spaces for {\em both} $b$. Let ${\cal S}(\sigma_{0,+})=\sum_k p_k \ket{\psi_k}\bra{\psi_k}$ and
${\cal S}(\sigma_{1,+})=\sum_k q_k \ket{\psi_k^{\perp}}\bra{\psi_k^{\perp}}$ where for all $k,l$ $\bra{\psi_k^{\perp}}\psi_l\rangle=0$.
Consider the purification of ${\cal S}(\sigma_{i,b})$ using an ancillary system
i.e. $\ket{\phi_{i,b}}=U_{{\cal S}} \ket{i}_b\ket{0}$. We can write
$\ket{\phi_{0,+}}=\sum_k \sqrt{p_k} \ket{\psi_k,k}$ and $\ket{\phi_{1,+}}=\sum_k \sqrt{q_k} \ket{\psi_k^{\perp},k}$.
Hence $U_{{\cal S}}\ket{0}_{\times}\ket{0}=\frac{1}{\sqrt{2}}(\ket{\phi_{0,+}}+\ket{\phi_{1,+}})$ and similar for
$U_{\cal S}\ket{1}_{\times} \ket{0}$. So we can write
\begin{eqnarray*}
||{\cal S}(\sigma_{0,\times})-{\cal S}(\sigma_{1,\times})||_{\tr}& =& \\
||\sum_k \sqrt{p_k q_k} (\ket{\psi_k}\bra{\psi_k^{\perp}}+\ket{\psi_k^{\perp}}\bra{\psi_k})||_{\tr} & \leq & \\
2 \sum_k \sqrt{p_k q_k}.
\end{eqnarray*}
For this quantity to be equal to 2 we observe that it is necessary that $p_k=q_k$. Thus we set $p_k=q_k$. Then
we observe that if any of the states $\ket{\psi_k}$ (or $\psi_k^{\perp}$) are non-orthogonal, i.e. $|\bra{\psi_k} \psi_l \rangle|>0$, then
the quantity $||\sum_k p_k (\ket{\psi_k}\bra{\psi_k^{\perp}}+\ket{\psi_k^{\perp}}\bra{\psi_k})||_{\tr} < 2$.

Let $S_k$ be the two-dimensional subspace spanned by the orthogonal vectors $\ket{\psi_k}$ and $\ket{\psi_k^{\perp}}$.
By the arguments above, the spaces $S_k$ are mutually orthogonal. We can reverse the super-operator ${\cal S}$ by first
projecting the output into one of the orthogonal subspaces $S_k$ and then applying a unitary operator $U_k$
that maps $\ket{\psi_k}$ and $\ket{\psi_k^{\perp}}$ onto the states $\ket{0}$ and $\ket{1}$.
\end{proof}

\begin{lemma} \label{lem:bestdistribution}
For any $\frac12 \leq p_i \leq 1$ with $\prod_{i=1}^n p_i \leq p^n$, we
have
\begin{equation} \label{eq:product}
\frac{1}{2^{n}} \prod_{i=1}^n (1+p_i) \leq p^{\log(4/3) n} \, .
\end{equation}
\end{lemma}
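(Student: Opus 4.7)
The plan is to reduce the multiplicative inequality to a pointwise one and then settle the pointwise bound by elementary calculus. Specifically, I would first show that for every $q \in [1/2,1]$,
\begin{equation*}
\frac{1+q}{2} \leq q^{\log(4/3)},
\end{equation*}
where $\log = \log_2$ throughout. Granted this, the lemma follows immediately: applying the pointwise bound to each $p_i$ and multiplying gives
\begin{equation*}
\frac{1}{2^n}\prod_{i=1}^n (1+p_i) \leq \prod_{i=1}^n p_i^{\log(4/3)} = \Bigl(\prod_{i=1}^n p_i\Bigr)^{\log(4/3)},
\end{equation*}
and since $\log(4/3) > 0$ the map $x \mapsto x^{\log(4/3)}$ is monotone increasing, so the hypothesis $\prod_i p_i \leq p^n$ upgrades this to $p^{n\log(4/3)}$, which is what we want.

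For the pointwise inequality, I would introduce $\phi(q) := \ln\bigl((1+q)/2\bigr) - \log(4/3)\,\ln q$ and aim to show $\phi \leq 0$ on $[1/2,1]$. The two endpoints are tight: $\phi(1) = 0$ trivially, while $\phi(1/2) = \ln(3/4) - \log(4/3)\ln(1/2) = \ln(3/4) + \ln(4/3) = 0$ after using $\log(4/3) = \ln(4/3)/\ln 2$. Differentiating, $\phi'(q) = 1/(1+q) - c/q$ with $c = \log(4/3) \in (0,1)$, which has a unique positive zero at $q_\star = c/(1-c) \approx 0.71$, sitting in the interior of $[1/2,1]$. A quick sign check gives $\phi'(1/2) = 2/3 - 2c < 0$ and $\phi'(1) = 1/2 - c > 0$, so $\phi$ decreases from $0$ at $q = 1/2$, attains its unique minimum at $q_\star$, and increases back to $0$ at $q = 1$; therefore $\phi \leq 0$ on the whole interval, giving the required pointwise bound.

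There is essentially no technical obstacle: the only real insight is spotting that the exponent $\log(4/3)$ is engineered precisely so that equality holds at both endpoints $q = 1/2$ and $q = 1$ (indeed $(1/2)^{\log(4/3)} = 1/(4/3) = 3/4$), after which a single-variable monotonicity argument finishes the job. As a sanity check on tightness, setting all $p_i = 1/2$ saturates the constraint with $p = 1/2$, and both sides of \eqref{eq:product} reduce to $(3/4)^n$, matching the $q = 1/2$ endpoint of the pointwise bound.
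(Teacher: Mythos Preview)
Your proof is correct and essentially identical to the paper's: your pointwise bound $(1+q)/2 \leq q^{\log(4/3)}$ is exactly the paper's inequality $p_i^{-\lambda} + p_i^{1-\lambda} \leq 2$ with $\lambda = \log(4/3)$, just rearranged, and both arguments then multiply over $i$ and invoke the constraint $\prod_i p_i \leq p^n$. The only difference is that the paper simply asserts the pointwise inequality is ``easy to verify,'' whereas you actually carry out the calculus verification via the endpoint-and-unique-critical-point argument.
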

\begin{proof}
With $\lambda \assign \log(4/3)$, it is easy to verify that $ p_i^{-\lambda} +
  p_i^{1-\lambda} \leq 2$ for $1/2 \leq p_i \leq 1$ and therefore,
\begin{eqnarray*}
\frac{1}{2^{n}}\prod_{i=1}^n (1+p_i) &=& \frac{1}{2^{n}} \prod_{i=1}^n p_i^\lambda \left( p_i^{-\lambda} +
  p_i^{1-\lambda} \right)\\
&\leq& \frac{1}{2^{n}} \cdot p^{\lambda n} \cdot 2^n.
\end{eqnarray*}
\end{proof}

\section{Depolarizing noise}
\label{app:depol}

We now evaluate $\max_{\mS} \Delta(\mS)^2$ for depolarizing noise. Recall that to determine this quantity,
we have to find an uncertainty relation, Eq.~(\ref{eq:uncert}), by
optimizing over all possible partial measurements $\mP$ as depicted in Figure~\ref{figure:depolModel}.
$$
\Delta^2 \assign \max_{\mS} \Delta(\mS)^2 =
\max_{\mP} P_g(X|{\cal S}(\sigma_+)) \cdot P_g(X|{\cal S}(\sigma_\times)),
$$
where $\mS$ acts on a single qubit, but we drop the index $i$ to improve readability.
For our analysis, it is convenient to think of $\mP$ as a partial measurement of the incoming qubit.
Note that this corresponds to letting Bob perform an arbitrary CPTP map from the space of the incoming
qubit to the space carrying the stored qubit. Furthermore, it is convenient
to consider maximizing the sum instead of the product of guessing probabilities
$$
\Gamma = \max_{\mP} P_g(X|{\cal S}(\sigma_+)) + P_g(X|{\cal S}(\sigma_\times)).
$$
This immediately gives us the bound $\Delta \leq \Gamma/2$. In the following, we will use the shorthand
\begin{eqnarray*}
p_+ &=& P_g(X|\mS(\sigma_+)),\\
p_\times &=& P_g(X|\mS(\sigma_\times))
\end{eqnarray*}
for the probabilities that Bob correctly decodes the bit after Alice has announced the basis information.

Any intermediate measurement $\mP$ that Bob may perform can be characterized by a set of measurement operators
$\{F_k\}$ such that $\sum_k F_k^\dagger F_k = \id$. Let the post-measurement state when Bob measures $\sigma_{i,b}$,
and obtained outcome $k$, be
$\tilde{\sigma}^k_{i,b}$.

The probability that Bob succeeds in decoding the bit
after the announcement of the basis is given by the average of probabilities (over all outcomes $k$) that
conditioned on the fact that he obtained outcome $k$ he correctly decodes the bit.
That is for $b \in \{+,\times\}$
\begin{eqnarray}
p_b &=&
\sum_k p_{k|b} \left(\frac{1}{2} + \frac{1}{4}||p_{0|kb} N(\tilde{\sigma}^k_{0,b}) - p_{1|kb} N(\tilde{\sigma}^k_{1,b})||_{\tr}\right) \nonumber \\
\label{plainProb}
&=& \frac{1}{2} + \frac{1}{4} \sum_k p_{k|b} ||r (p_{0|kb} \tilde{\sigma}^k_{0,b} - p_{1|kb} \tilde{\sigma}^k_{1,b}) \nonumber \\
&& + (1-r) (p_{0|kb} - p_{1|kb})\id/2||_{\tr},
\end{eqnarray}
where
\begin{eqnarray*}
&&p_{k|b} = \Tr(F_k (\sigma_{0,b} + \sigma_{1,b})F_k^\dagger)/2=\\
&&\Tr(F_k
\frac{\sigma_{0,b}+\sigma_{1,b}}{2} F_k^\dagger)=\frac12 \Tr(F_k
F_k^\dagger)
\end{eqnarray*}
is the probability of obtaining measurement outcome $k$
conditioned on the fact that the basis was $b$ (and we even see from
the above that it is actually independent of $b$),
$\tilde{\sigma}^k_{0,b} = F_k\sigma_{0,b}F_k^\dagger/p_{k|0b}$ is the
post-measurement state for outcome $k$, and $p_{0|kb}$ is the
probability that we are given this state. Definitions are analogous for
the bit $1$.

We now show that Bob's optimal strategy is to measure in the Breidbart basis for $r < 1/\sqrt{2}$,
and to simply store the qubit for $r \geq 1/\sqrt{2}$. This then immediately allows us to evaluate $\Delta$.
To prove our result, we proceed in three steps: First, we will simplify our problem considerably until
we are left with a single Hermitian measurement operator over which we need to maximize. Second, we show
that the optimal measurement operator is diagonal in the Breidbart basis. And finally, we show that
depending on the amount of noise, this measurement operator is either proportional to the identity,
or proportional to a rank one projector. Our individual claims are indeed very intuitive.

For any measurement $M = \{F_k\}$, let $B(M) = p_+^M + p_\times^M$ for the measurement $M$, where $p_+^M$ and $p_\times^M$
are the success probabilities similar to Eq.~(\ref{plainProb}), but restricted to using the measurement $M$. First of all, note that we can easily combine two measurements. Intuitively,
the following statement says that if we choose one measurement with probability $\alpha$, and the other
with probability $\beta$ our average success probability will be the average of the success probabilities
obtained via the individual measurements:
\begin{claim}\label{convexity}
Let $M_1=\{F_k^1\}$ and $M_2=\{F_k^2\}$ be two measurements. Then
$B(\alpha M_1 + \beta M_2) = \alpha B(M_1) + \beta B(M_2)$,
where $\alpha M_1 + \beta M_2 := \{\sqrt{\alpha} F_k^1\} \cup \{\sqrt{\beta} F_k^2\}$ for $\alpha,\beta \geq 0$
and $\alpha + \beta=1$.
\end{claim}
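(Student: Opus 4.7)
The plan is to exploit the fact that the success probability $p_b^M$ in Eq.~(\ref{plainProb}) is a sum over branches indexed by outcomes $k$, where each summand factors into an outcome probability $p_{k|b}$ and a term that depends only on the \emph{normalized} post-measurement states. Interleaving two measurements rescales the outcome probabilities by $\alpha$ or $\beta$ but leaves the normalized post-measurement states untouched, which forces the entire expression to be a convex combination.

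First, I would verify that $\alpha M_1+\beta M_2$ is a bona fide measurement by checking the completeness relation, $\sum_k(\sqrt{\alpha}F_k^1)^\dagger(\sqrt{\alpha}F_k^1)+\sum_k(\sqrt{\beta}F_k^2)^\dagger(\sqrt{\beta}F_k^2)=\alpha\id+\beta\id=\id$. Next, for each outcome $k$ coming from the $M_1$-branch, I would observe that the outcome probability conditioned on basis $b$ scales as $p_{k|b}^{\alpha M_1+\beta M_2}=\alpha\, p_{k|b}^{M_1}$ (the two factors $\sqrt{\alpha}$ combine to $\alpha$ inside the trace), while the normalized post-measurement state $F_k^1\sigma_{i,b}(F_k^1)^\dagger/\Tr(F_k^1\sigma_{i,b}(F_k^1)^\dagger)$ is unchanged, since the $\sqrt{\alpha}$ cancels between numerator and normalization. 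The same reasoning with $\beta$ applies to the $M_2$-branch, and the corresponding $p_{0|kb},p_{1|kb}$ posteriors also agree with the single-measurement values.

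Substituting these into Eq.~(\ref{plainProb}) then makes each $M_1$-branch summand coincide with its $M_1$-alone counterpart up to the overall factor $\alpha$, and each $M_2$-branch summand with its $M_2$-alone counterpart up to $\beta$. Adding over all outcomes yields $p_b^{\alpha M_1+\beta M_2}=\alpha p_b^{M_1}+\beta p_b^{M_2}$ for each $b\in\{+,\times\}$, and summing the two bases gives the claimed identity $B(\alpha M_1+\beta M_2)=\alpha B(M_1)+\beta B(M_2)$. There is no real obstacle beyond careful bookkeeping; the only conceptual point is recognizing that the $\sqrt{\alpha}$ scaling changes only the outcome weights, not the renormalized post-measurement states, so that the Helstrom-style trace-norm terms in Eq.~(\ref{plainProb}) are inherited unchanged from the individual measurements.
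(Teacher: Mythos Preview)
Your proposal is correct and follows essentially the same route as the paper: both argue that scaling a Kraus operator by $\sqrt{\alpha}$ multiplies the outcome probability $p_{k|b}$ by $\alpha$ while leaving the normalized post-measurement states $\tilde{\sigma}^k_{i,b}$ and the posteriors $p_{i|kb}$ unchanged, so that each summand in Eq.~(\ref{plainProb}) picks up exactly the factor $\alpha$ (respectively $\beta$), yielding $p_b^{\alpha M_1+\beta M_2}=\alpha p_b^{M_1}+\beta p_b^{M_2}$. The only addition you make is the explicit check of the completeness relation, which the paper leaves implicit.
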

\begin{proof}
  Let $F=\set{F_k}_{k=1}^f$ and $G=\set{G_k}_{k=1}^g$ be measurements,
  $0 \leq \alpha \leq 1$ and $M \assign \set{\sqrt{\alpha}
    F_k}_{k=1}^{f} \cup \set{\sqrt{1-\alpha} G_k}_{k=f+1}^{f+g}$ be
  the measurement $F$ with probability $\alpha$ and measurement $G$
  with probability $1-\alpha$. We denote by $p_{\cdot}^F, p_{\cdot}^G,
  p_{\cdot}^M$ the probabilities corresponding to measurements $F,G,M$
  respectively. Observe that for $1\leq k \leq f$, $p_{k|b}^M =
  \frac12 \Tr(\alpha F_k F_k^\dagger) = \alpha p_{k|b}^F$ and
  analogously for $f+1 \leq k \leq f+g$, we have $p_{k|b}^M =
  (1-\alpha)p_{k|b}^G$. We observe furthermore that for $1\leq k \leq
  f$ and $x \in \set{0,1}$, $\alpha$ cancels out by the normalization,
  $\tilde{\sigma}_{x,b}^{k,M} = \frac{\alpha F_k \sigma_{x,b} F_k^\dagger}{p_{k|xb}^M} = \frac{F_k
    \sigma_{x,b} F_k^\dagger}{p_{k|xb}^F} = \tilde{\sigma}_{x,b}^{k,F}$ and similarly for $f+1 \leq k
  \leq f+g$. Finally, we can convince ourselves that
  $p_{x|kb}^M=p_{x|kb}^F=p_{x|(k-f)b}^G$, as the probability to be
  given state $\tilde{\sigma}_{0,b}^k$ is the same when the measurement
  outcome and the basis is fixed.
Putting everything together, we obtain
\begin{align*}
  p_b^M &= \sum_{k=1}^{f+g} p_{k|b}^M \left(\frac{1}{2} +
    \frac{1}{4}||p_{0|kb}^M N(\tilde{\sigma}^{k,M}_{0,b}) - p_{1|kb}^M
    N(\tilde{\sigma}^{k,M}_{1,b})||_{\tr}\right) \\
 &= \sum_{k=1}^{f} \alpha p_{k|b}^F\left(\frac{1}{2} +
    \frac{1}{4}||p_{0|kb}^F N(\tilde{\sigma}^{k,F}_{0,b}) - p_{1|kb}^F
    N(\tilde{\sigma}^{k,F}_{1,b})||_{\tr}\right) \\
 &\quad +\sum_{k=f+1}^g (1-\alpha) p_{k|b}^G  \cdot \\
&\qquad\qquad \left(\frac{1}{2} + \frac{1}{4}||p_{0|kb}^G N(\tilde{\sigma}^{k,G}_{0,b}) - p_{1|kb}^G
    N(\tilde{\sigma}^{k,G}_{1,b})||_{\tr}\right) \\
 &= \alpha p_b^F + (1-\alpha) p_b^G \, .
\end{align*}
\end{proof}

We can now make a series of observations.
\begin{claim}\label{invarianceOfBunderG}
Let $M = \{F_k\}$ and $G = \{\id,X,Z,XZ\}$. Then for all $g \in G$ we have $B(M) = B(gMg^\dagger)$.
\end{claim}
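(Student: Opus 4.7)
The plan is to exploit two symmetries: the unitary covariance of the depolarizing channel $\mN$, and the fact that the group $G = \{\id, X, Z, XZ\}$ stabilizes each of the BB84 bases as a set. Concretely, I first want to observe that for every $g \in G$ and every $b \in \{+,\times\}$, conjugation by $g$ sends $\sigma_{x,b}$ to $\sigma_{\pi_g^b(x),b}$, where $\pi_g^b$ is a (possibly trivial) bit-flip on $x$: $X$ flips the bit in the $+$ basis and acts trivially in $\times$, $Z$ does the opposite, and $XZ$ flips in both. In particular the basis label $b$ is preserved in all cases.

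Next I would track how the quantities entering $B(gMg^\dagger)$ transform. Writing $F_k' = gF_kg^\dagger$ and using cyclicity of the trace, the outcome probabilities satisfy $p_{k|b}' = \tfrac12\Tr(F_k' F_k'^\dagger) = \tfrac12\Tr(F_k F_k^\dagger)$, while
\[
p_{k|xb}' = \Tr(F_k' \sigma_{x,b} F_k'^\dagger) = \Tr(F_k \sigma_{\pi_g^b(x),b} F_k^\dagger) = p_{k|\pi_g^b(x)\,b},
\]
so $p_{x|kb}' = p_{\pi_g^b(x)|kb}$. The post-measurement states satisfy
\[
\tilde{\sigma}_{x,b}'^{k} \;=\; \frac{F_k' \sigma_{x,b} F_k'^\dagger}{p_{k|xb}'} \;=\; g\, \tilde{\sigma}_{\pi_g^b(x),b}^{k}\, g^\dagger.
\]

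Now I would plug into the formula (\ref{plainProb}) for $p_b$ and use unitary covariance of the depolarizing channel, $\mN(g\rho g^\dagger)=g\mN(\rho)g^\dagger$, together with unitary invariance of the trace norm $\|g A g^\dagger\|_\tr = \|A\|_\tr$. This gives
\[
\bigl\|p_{0|kb}' \mN(\tilde{\sigma}_{0,b}'^{k}) - p_{1|kb}' \mN(\tilde{\sigma}_{1,b}'^{k})\bigr\|_\tr
= \bigl\|p_{\pi_g^b(0)|kb}\mN(\tilde{\sigma}_{\pi_g^b(0),b}^{k}) - p_{\pi_g^b(1)|kb}\mN(\tilde{\sigma}_{\pi_g^b(1),b}^{k})\bigr\|_\tr,
\]
and because $\|A-B\|_\tr=\|B-A\|_\tr$, this equals the corresponding summand in the formula for $p_b^M$. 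Summing over $k$ yields $p_b^{gMg^\dagger} = p_b^M$ for each $b\in\{+,\times\}$, hence $B(gMg^\dagger)=B(M)$.

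The only place that requires any care is verifying the case analysis in step 1 (that $G$ permutes the four BB84 states basis-by-basis), since everything else is just relabelling plus the two symmetry properties. There is no real obstacle, as the depolarizing channel's unitary covariance is immediate from $\mN(\rho)=r\rho+(1-r)\id/2$ and the identity is manifestly invariant under any unitary conjugation.
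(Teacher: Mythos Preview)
Your proposal is correct and follows essentially the same approach as the paper: both arguments rest on the observation that every $g\in G$ permutes the BB84 states within each basis (at most a bit flip), together with unitary invariance of the trace norm. Your write-up is more explicit than the paper's (in particular you spell out the covariance $\mN(g\rho g^\dagger)=g\mN(\rho)g^\dagger$ and track the relabelling $\pi_g^b$ carefully), but the underlying idea is identical.
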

\begin{proof}
This claim follows immediately from that fact that for the trace norm we have $||U A U^\dagger||_{\tr} = ||A||_{\tr}$
for all unitaries $U$, and by noting that for all $g \in G$, $g$ can at most exchange the roles
of $0$ and $1$. That is, we can perform a bit flip before the measurement which we can correct for afterwards
by applying classical post-processing: we have for all $g \in G$ that
\begin{eqnarray*}
&&p_{k|b}||p_{0|kb} N\left(
\frac{F_k g \sigma_{0,b} g^\dagger F_k^\dagger}{p_{k|0b}}\right)
- p_{1|kb} N\left(\frac{F_k g \sigma_{1,b} g^\dagger F_k^\dagger}{p_{k|1b}}\right)||_{\tr}\\
&&=
p_{k'|b}||p_{0|kb} N\left(\frac{F_k \sigma_{0,b} F_k^\dagger}{p_{k|0b}}\right)
- p_{1|kb} N\left(\frac{F_k  \sigma_{1,b}  F_k^\dagger}{p_{k|1b}}\right)||_{\tr}.
\end{eqnarray*}
\end{proof}

It also follows that
\begin{corollary}\label{simplify}
For all $k$ we have for all $b \in \{+,\times\}$ and $g \in G$ that
\begin{eqnarray*}
&&||p_{0|kb} N\left(\frac{F_k \sigma_{0,b} F_k^\dagger}{p_{k|0b}}\right)
- p_{1|kb} N\left(\frac{F_k \sigma_{1,b} F_k^\dagger}{p_{k|1b}}\right)||_{\tr}\\
&& =
||p_{0|kb} N\left(\frac{F_{k} g\sigma_{0,b}g^\dagger F_{k}^\dagger}{p_{k|0b}}\right)
- p_{1|kb} N\left(\frac{F_{k} g \sigma_{1,b}g^\dagger  F_{k}^\dagger}{p_{k|1b}}\right)||_{\tr}.
\end{eqnarray*}
\end{corollary}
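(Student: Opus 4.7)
My plan is to handle this by case analysis on how $g\in G=\{\id,X,Z,XZ\}$ acts on the pair $\{\sigma_{0,b},\sigma_{1,b}\}$ for the fixed basis $b$. A direct check on the BB84 states shows that for each $b$, every $g\in G$ either (i) fixes both $\sigma_{0,b}$ and $\sigma_{1,b}$ individually, or (ii) swaps them. In case (i) the two sides of the corollary are syntactically identical and there is nothing to prove. In case (ii) we have $F_k g\sigma_{0,b}g^\dagger F_k^\dagger=F_k\sigma_{1,b}F_k^\dagger$ and $F_k g\sigma_{1,b}g^\dagger F_k^\dagger=F_k\sigma_{0,b}F_k^\dagger$, so the right-hand side becomes
\[
||\,p_{0|kb}\,N(F_k\sigma_{1,b}F_k^\dagger)/p_{k|0b}\;-\;p_{1|kb}\,N(F_k\sigma_{0,b}F_k^\dagger)/p_{k|1b}\,||_{\tr}.
\]

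The key step is to exploit the uniform prior on $X$ that is assumed throughout (each $\sigma_{x,b}$ is sent with probability $1/2$, which is why the paper writes $p_{k|b}=\frac{1}{2}\Tr(F_kF_k^\dagger)$). By Bayes' rule this forces $p_{x|kb}=p_{k|xb}/(p_{k|0b}+p_{k|1b})$, and hence the ratio $p_{x|kb}/p_{k|xb}=1/(p_{k|0b}+p_{k|1b})$ is the \emph{same} for $x=0$ and $x=1$. Factoring this common constant out, both the left-hand side of the corollary and the displayed right-hand side above collapse to
\[
\frac{1}{p_{k|0b}+p_{k|1b}}\,||\,N(F_k\sigma_{0,b}F_k^\dagger)-N(F_k\sigma_{1,b}F_k^\dagger)\,||_{\tr},
\]
where in case (ii) one picks up an overall minus sign that is immaterial because $||A-B||_{\tr}=||B-A||_{\tr}$.

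I do not anticipate any real obstacle here: the corollary is essentially Claim~\ref{invarianceOfBunderG} stated per-outcome $k$ rather than after summing over $k$, and the two ingredients invoked there --- the permutation action of the Pauli group $G$ on the BB84 states and the unitary (here, sign) invariance of the trace norm --- localize naturally to each measurement outcome. The only subtlety worth watching is that the weights $p_{x|kb}$ and the normalizations $p_{k|xb}$ appearing in the corollary are defined with respect to the \emph{original} preparations $\sigma_{x,b}$ on both sides; this is compatible with case (ii) because $g$ merely permutes the two preparations with each other, and the uniform-prior cancellation above absorbs the relabeling automatically.
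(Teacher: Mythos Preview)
Your proposal is correct and follows essentially the same approach as the paper, which simply refers back to the proof of Claim~\ref{invarianceOfBunderG}: the Pauli $g$ either fixes or swaps $\sigma_{0,b}$ and $\sigma_{1,b}$, and the swap is harmless by the sign-invariance of the trace norm. Your explicit use of the uniform prior to show $p_{x|kb}/p_{k|xb}=1/(p_{k|0b}+p_{k|1b})$ is a welcome clarification of a step the paper leaves implicit in its ``classical post-processing'' remark.
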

\begin{proof}
This follows from the proof of Claim~\ref{invarianceOfBunderG}.
\end{proof}

\begin{claim}\label{maximumInvariant}
Let $G = \{\id,X,Z,XZ\}$.
There exists a measurement operator $F$ such that the maximum of $B(M)$ over all measurements
$M$ is achieved by a measurement proportional to $\{gFg^\dagger \mid g \in G\}$.
\end{claim}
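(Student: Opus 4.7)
The plan is to exploit the Pauli symmetry of the two BB84 bases together with the linearity established in Claim~\ref{convexity}. Start with any optimal measurement $M^* = \{F_k^*\}$ achieving $B(M^*) = \max_M B(M)$. By Claim~\ref{invarianceOfBunderG}, every twisted measurement $gM^*g^\dagger = \{gF_k^*g^\dagger\}$ also achieves the maximum, for each $g\in G$. So the symmetrized object
\[
\tilde M \;\assign\; \tfrac{1}{4}\sum_{g\in G} gM^*g^\dagger
\]
is a valid POVM (the scaled operators are $\{\tfrac12 gF_k^*g^\dagger\}_{k,g}$) and, by iterating Claim~\ref{convexity}, $B(\tilde M) = \tfrac14 \sum_g B(gM^*g^\dagger) = B(M^*)$. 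Hence $\tilde M$ is also optimal and has the symmetry we want, but indexed over all pairs $(k,g)$.

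Next I would split $\tilde M$ into pieces indexed by $k$ alone, so as to isolate a single generator $F$. Because $G \cong \{\id,X,Z,XZ\}$ is a Pauli $1$-design on a single qubit, Pauli twirling gives
\[
\sum_{g\in G} g A g^{\dagger} = 2\,\Tr(A)\,\id
\]
for any single-qubit operator $A$. Setting $A = F_k^{*\dagger}F_k^*$ and $c_k^2 \assign 2\Tr(F_k^{*\dagger}F_k^*)$, the set $M_k \assign \{gF_k^*g^\dagger/c_k : g\in G\}$ is therefore a \emph{bona fide} POVM of the desired shape. Writing $\beta_k \assign c_k^2/4 = \Tr(F_k^{*\dagger}F_k^*)/2$, the completeness $\sum_k F_k^{*\dagger}F_k^* = \id$ gives $\sum_k \beta_k = 1$, and a direct comparison of operators shows $\tilde M = \sum_k \beta_k M_k$ in the convex-combination sense of Claim~\ref{convexity}.

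Finally, applying Claim~\ref{convexity} once more yields $B(\tilde M) = \sum_k \beta_k B(M_k)$. Since $B(\tilde M)$ equals the maximum $B(M^*)$ and $\{\beta_k\}$ is a probability distribution, some index $k^\star$ must satisfy $B(M_{k^\star}) \geq B(M^*)$, hence equality. Taking $F \assign F_{k^\star}^*$ produces an optimal measurement of the required form $\{gFg^\dagger/c : g\in G\}$, with $c = c_{k^\star}$ absorbing the normalization. The only step requiring real care is verifying the Pauli-twirling identity and the bookkeeping that makes $\{\beta_k\}$ a probability distribution; once those are in place, the argument is purely a symmetrize-then-average use of the two preceding claims.
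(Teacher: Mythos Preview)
Your argument is correct and follows essentially the same route as the paper: symmetrize an optimal measurement over $G$, use the Pauli-twirl identity $\sum_{g\in G} gAg^\dagger = 2\Tr(A)\id$ to exhibit each $\{gF_k g^\dagger/c_k\}_g$ as a valid measurement, express the symmetrized measurement as a convex combination of these, and invoke Claim~\ref{convexity} to pick out a single optimal $k$. Your normalization bookkeeping (the $\tfrac12$ scaling and the weights $\beta_k = \Tr(F_k^{*\dagger}F_k^*)/2$) is in fact cleaner than the paper's presentation.
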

\begin{proof}
Let $M= \{F_k\}$ be a measurement. Let $K = |M|$ be the number of measurement operators.
Clearly, $\hat{M} = \{\hat{F}_{g,k}\}$
with
$$
\hat{F}_{g,k} = \frac{1}{4} g F_k g^\dagger,
$$
is also a quantum measurement since $\sum_{g,k} \hat{F}_{g,k}^\dagger\hat{F}_{g,k} = \id$.
It follows from Claims~\ref{convexity} and~\ref{invarianceOfBunderG} that $B(M) = B(\hat{M})$.
Define operators
$$
N_{g,k} = \frac{1}{\sqrt{2\Tr(F_k^\dagger F_k)}}g F_k g^\dagger.
$$
Note that
$$
\sum_{g \in G} N_{g,k} = \frac{1}{\sqrt{2\Tr(F_k^\dagger F_k)}} \sum_{u,v \in \01}X^uZ^v
F_k^\dagger F_k Z^v X^u = \id.
$$
(see for example Hayashi~\cite{hayashi}).
Hence $M_k = \{N_{g,k}\}$ is a valid quantum measurement.
Now, note that $\hat{M}$ can be obtained from $M_1,\ldots,M_K$ by averaging.
Hence, by Claim~\ref{convexity} we have
$$
B(M)= B(\hat{M}) \leq \max_k B(M_k).
$$
Let $M^*$ be the optimal measurement.
Clearly, $m = B(M^*) \leq \max_k B(M^*_k) \leq m$ by the above and Corollary~\ref{simplify}
from which our claim follows.
\end{proof}

Note that Claim~\ref{maximumInvariant} also gives us that we have at most 4 measurement operators.
Wlog, we will take the measurement outcomes to be labeled $1,2,3,4$.

Finally, we note that we can restrict ourselves to optimizing over positive-semidefinite (and hence Hermitian)
matrices only.
\begin{claim}
Let $F$ be a measurement operator, and let
$g(F) := 1 + \sum_{b,k} p_{k|b}||p_{0|b}N(\tilde{\sigma_{0,b}})- p_{1|b}N(\tilde{\sigma_{1,b}})||_{\tr}$ with
$\tilde{\sigma_{0,b}} = F \sigma_{0,b} F^\dagger/\Tr(F\sigma_{0,b}F^\dagger)$ and
$\tilde{\sigma_{1,b}} = F \sigma_{1,b} F^\dagger/\Tr(F\sigma_{1,b}F^\dagger)$.
Then there exists a Hermitian operator $\hat{F}$, such that $g(F) = g(\hat{F})$.
\end{claim}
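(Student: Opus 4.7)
The natural tool is the polar decomposition: write $F = U|F|$, where $|F| = \sqrt{F^\dagger F}$ is positive semidefinite (hence Hermitian) and $U$ is unitary. The plan is to take $\hat{F} = |F|$ and check that every piece of the function $g(F)$ is preserved when $F$ is replaced by $\hat F$.

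First, the POVM-normalization quantity and all measurement probabilities are immediate: $\hat F^\dagger \hat F = |F|^2 = F^\dagger F$, so $\Tr(\hat F \sigma_{x,b} \hat F^\dagger) = \Tr(\sigma_{x,b} \hat F^\dagger \hat F) = \Tr(\sigma_{x,b} F^\dagger F) = \Tr(F \sigma_{x,b} F^\dagger)$. This means $p_{k|b}$, $p_{0|b}$, and $p_{1|b}$ are unchanged. Second, for the post-measurement states observe
\[
\tilde\sigma^F_{x,b} \;=\; \frac{F\sigma_{x,b}F^\dagger}{\Tr(F\sigma_{x,b}F^\dagger)} \;=\; U\,\frac{|F|\sigma_{x,b}|F|}{\Tr(|F|\sigma_{x,b}|F|)}\,U^\dagger \;=\; U\,\tilde\sigma^{\hat F}_{x,b}\,U^\dagger,
\]
so the two families of post-measurement states differ only by the global unitary conjugation $U(\cdot)U^\dagger$.

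The remaining ingredient is that both the depolarizing noise and the trace norm are invariant under this conjugation. The depolarizing channel is unitarily covariant, $\mN(U\rho U^\dagger) = r\,U\rho U^\dagger + (1-r)\id/2 = U\mN(\rho)U^\dagger$, because $U\id U^\dagger = \id$. Hence
\[
p_{0|b}\,\mN(\tilde\sigma^F_{0,b}) - p_{1|b}\,\mN(\tilde\sigma^F_{1,b}) \;=\; U\bigl(p_{0|b}\,\mN(\tilde\sigma^{\hat F}_{0,b}) - p_{1|b}\,\mN(\tilde\sigma^{\hat F}_{1,b})\bigr)U^\dagger,
\]
and using $\|UAU^\dagger\|_{\tr} = \|A\|_{\tr}$ the trace-norm terms inside $g$ are identical. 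Combined with the equality of the prefactors $p_{k|b}$, this yields $g(F) = g(\hat F)$ with $\hat F = |F|$ Hermitian.

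There is no real obstacle here; the one spot to be careful about is that the replacement $F \mapsto |F|$ preserves exactly the quantity $F^\dagger F$ (not $FF^\dagger$), which is precisely what the POVM-completeness condition $\sum_k F_k^\dagger F_k = \id$ cares about. Because $\hat F^\dagger \hat F = F^\dagger F$, applying this substitution to each operator of a measurement simultaneously keeps it a valid measurement while making every operator Hermitian, so the optimization in the preceding claims can without loss of generality be restricted to Hermitian (indeed positive semidefinite) measurement operators.
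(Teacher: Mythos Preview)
Your proof is correct and follows essentially the same approach as the paper: polar decomposition followed by unitary invariance of the trace norm. The paper writes the decomposition as $F^\dagger = \hat{F}U$ (so $\hat{F}=\sqrt{FF^\dagger}$) whereas you take $F=U|F|$ (so $\hat{F}=\sqrt{F^\dagger F}$); both work, and your choice has the pleasant feature that $\hat{F}^\dagger\hat{F}=F^\dagger F$ exactly, so the POVM condition $\sum_k F_k^\dagger F_k=\id$ is manifestly preserved. You also make explicit the unitary covariance of the depolarizing channel, which the paper's terse proof leaves implicit.
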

\begin{proof}
Let $F^\dagger = \hat{F}U$ be the polar decomposition of $F^\dagger$,
where $\hat{F}$ is positive semidefinite and $U$ is unitary~\cite[Corollary 7.3.3]{horn&johnson:ma}.
Evidently, since the trace is cyclic, all probabilities remain the same. It follows immediately from
the definition of the trace-norm that $||U A U^\dagger||_{\tr} = ||A||_{\tr}$ for all unitaries $U$,
which completes our proof.
\end{proof}

To summarize, our optimization problem can now be simplified to
\begin{eqnarray*}
&&\max_M B(M) = \max_M p_+^M + p_\times^M \leq\\
&&\max_F
1 + \sum_{b,k} p_{k|b} ||p_{0|b}N(\tilde{\sigma_{0,b}})- p_{1|b}N(\tilde{\sigma_{1,b}})||_{\tr}\\
&&= 1 + 2\sum_{b} ||r(F(\sigma_{0,b} - \sigma_{1,b})F) \\
&&\qquad\qquad + (1-r)\Tr(F(\sigma_{0,b} - \sigma_{1,b})F)\frac{\id}{2}||_{\tr}
\end{eqnarray*}
where the maximization is now taken over a single operator $F$, and we have used the fact
that we can write $p_{0|kb} = p_{k|0b}/(2p_{k|b})$ and we have 4 measurement operators.

\subsection{F is diagonal in the Breidbart basis}
Now that we have simplified our problem already considerably, we are ready to perform the actual optimization.
Since we are in $d=2$ and $F$ is Hermitian, we may express $F$ as
$$
F = \alpha \outp{\phi}{\phi} + \beta \outp{\phi^\perp}{\phi^\perp},
$$
for some state $\ket{\phi}$ and real numbers $\alpha,\beta$. We first of all note that from
$\sum_{k} F_k F_k^\dagger = \id$, we obtain that
\begin{eqnarray*}
&&\Tr\left(\sum_k F_k F_k^\dagger\right) = \sum_k \Tr(F_k F_k) =\\
&& \sum_{g \in\{\id,X,Z,XZ\}} \Tr(gFgg^\dagger Fg^\dagger)
= 4 \Tr(FF) = \Tr(\id) = 2,
\end{eqnarray*}
and hence $\Tr(FF) = \alpha^2 + \beta^2 = 1/2$. Furthermore using that
$\outp{\phi}{\phi} + \outp{\phi^\perp}{\phi^\perp} =\id$
we then have
\begin{equation}\label{Fdef}
F = \beta \id + (\alpha - \beta)\outp{\phi}{\phi},
\end{equation}
with $\beta = \sqrt{1-\alpha^2}$. Our first goal is now to show that $\ket{\phi}$ is a Breidbart vector (or the bit-flipped
version thereof). To this end, we first formalize our intuition that we may take $\ket{\phi}$ to lie in the XZ plane of
the Bloch sphere only. Since we are only interested in the trace-distance term of $B(M)$, we restrict ourselves to considering
\begin{eqnarray*}
C(F) \assign \sum_{b}&&||r(F(\sigma_{0,b} - \sigma_{1,b})F) + \\[-2mm]
&&(1-r)\Tr(F(\sigma_{0,b} - \sigma_{1,b})F)\frac{\id}{2}||_{\tr}.
\end{eqnarray*}

\begin{claim}\label{ignoreY}
Let $F$ be the operator that maximizes $C(F)$, and write $F$ as in Eq.(\ref{Fdef}).
Then $\ket{\phi}$ lies in the XZ plane in the Bloch sphere. (i.e. $\Tr(FY)=0$).
\end{claim}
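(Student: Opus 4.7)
My plan is to parametrize $F$ in Bloch form and exploit the fact that every BB84 state lies in the XZ-plane of the Bloch sphere, which should force $C(F)$ to depend on $\ket{\phi}$ only through $v_x^2$ and $v_z^2$ and to be nondecreasing in each.

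Concretely, I would write $\outp{\phi}{\phi}=\tfrac12(\id+\vec v\cdot\vec\sigma)$ for a unit vector $\vec v=(v_x,v_y,v_z)$, so that
$$F = a\,\id + b(v_xX+v_yY+v_zZ), \qquad a=\tfrac{\alpha+\beta}{2},\ \ b=\tfrac{\alpha-\beta}{2}.$$
If $b=0$ then $F\propto\id$ and the claim holds trivially, so I may assume $b\neq 0$. The normalization $\alpha^2+\beta^2=1/2$ derived in the excerpt becomes $a^2+b^2=1/4$. Using $\sigma_{0,+}-\sigma_{1,+}=Z$ and $\sigma_{0,\times}-\sigma_{1,\times}=X$, the objective $C(F)$ splits as $||A_+||_\tr+||A_\times||_\tr$ with $A_+=r\,FZF+(1-r)\Tr(FZF)\,\id/2$ and $A_\times$ analogous with $Z\mapsto X$.

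The heart of the argument is a short Pauli computation. Using $Z^2=\id$, the anticommutators $\{Z,X\}=\{Z,Y\}=0$, and the identity $XZY+YZX=0$, I would verify
$$(\vec v\cdot\vec\sigma)\,Z\,(\vec v\cdot\vec\sigma) = (2v_z^2-1)Z + 2v_zv_xX + 2v_zv_yY.$$
Substituting, $A_+$ takes the Bloch form $c_+\id+\vec d_+\cdot\vec\sigma$ with $c_+=2abv_z$ and, after collecting terms and using $v_x^2+v_y^2+v_z^2=1$,
$$|\vec d_+|^2 \;=\; r^2\bigl[(a^2-b^2)^2 + 4a^2b^2\, v_z^2\bigr].$$
The key observation is that the off-diagonal contributions $v_xv_z,v_yv_z$ in the $X$ and $Y$ components of $\vec d_+$ conspire so that $|\vec d_+|^2$ depends on $\vec v$ only through $v_z^2$. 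Since every $2\times 2$ Hermitian matrix of the form $c\id+\vec d\cdot\vec\sigma$ has trace norm $2\max(|c|,|\vec d|)$, and both $|c_+|=2|ab||v_z|$ and $|\vec d_+|$ are nondecreasing in $v_z^2$, we conclude that $||A_+||_\tr$ depends on $\vec v$ only through $v_z^2$ and is nondecreasing in $v_z^2$. The symmetric calculation with $X$ in place of $Z$ shows that $||A_\times||_\tr$ depends on $\vec v$ only through $v_x^2$ and is nondecreasing in $v_x^2$.

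Therefore, for fixed $a,b$, maximizing $C(F) = ||A_+||_\tr + ||A_\times||_\tr$ subject to $v_x^2+v_y^2+v_z^2=1$ forces $v_x^2+v_z^2=1-v_y^2$ to be as large as possible, i.e.\ $v_y=0$. This gives $\Tr(FY)=2bv_y=0$ as claimed, so $\ket\phi$ lies in the XZ-plane. The main obstacle will be the Pauli-algebra bookkeeping that produces the $(2v_z^2-1)Z$ identity and the resulting clean expression for $|\vec d_+|^2$; once that computation is in hand, the monotonicity argument and the geometric conclusion are both immediate.
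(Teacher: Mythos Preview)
Your proof is correct and follows the same overall strategy as the paper---Bloch parametrization of $F$, then showing that each trace-norm summand depends only on the corresponding squared Bloch component and is nondecreasing in it---but your execution is cleaner. The paper computes the eigenvalues of $P$ and $T$ symbolically (invoking Mathematica), defines $h(\alpha,x,r)=|\lambda_1|+|\lambda_2|$, squares it, and then does a case analysis on the sign of $f(\alpha,x)^2-r^2g(\alpha,x)^2$ to establish monotonicity in $x$. You bypass all of that by deriving the closed form $|\vec d_+|^2=r^2[(a^2-b^2)^2+4a^2b^2v_z^2]$ directly from the Pauli identity $(\vec v\cdot\vec\sigma)Z(\vec v\cdot\vec\sigma)=(2v_z^2-1)Z+2v_zv_xX+2v_zv_yY$ and then invoking $||c\,\id+\vec d\cdot\vec\sigma||_{\tr}=2\max(|c|,|\vec d|)$; monotonicity of both $|c_+|=2|ab||v_z|$ and $|\vec d_+|$ in $v_z^2$ is then immediate, with no case split. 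The conclusion ($v_y=0$) is the same, but your route is self-contained and avoids computer algebra.
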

\begin{proof}
We first parametrize the state in terms of its Bloch vector:
$$
\outp{\phi}{\phi} = \frac{\id + x X + y Y + z Z}{2}.
$$
Since $\ket{\phi}$ is pure we can write $y = \sqrt{1 - x^2 - z^2}$. Hence, we can express
$F$ as
$$
F = \frac{1}{2}\left((\alpha + \beta) \id + (\alpha - \beta)(x X + y Y + z Z)\right).
$$
Noting that $\sigma_{0,+} - \sigma_{1,+} = Z$ and $\sigma_{0,\times} - \sigma_{1,\times} = X$
we can compute for the computational basis
\begin{eqnarray*}
P &\assign& r(FZF) + (1-r)\Tr(FZF)\frac{\id}{2}\\ &=&
\frac{1}{2}\left(\left(2\alpha^2 - \frac{1}{2}\right)z\id +
r\left((\alpha-\beta)^2 xz X\right.\right.\\
&+& \left.\left.(\alpha-\beta)^2 yz Y +
\left((\alpha-\beta)^2 z^2 + 2 \alpha \beta\right)
Z\right)\right),
\end{eqnarray*}
and for the Hadamard basis:
\begin{eqnarray*}
T &\assign& r(FXF) + (1-r)\Tr(FXF)\frac{\id}{2}\\
&=&
\frac{1}{2}\left(\left(2\alpha^2-\frac{1}{2}\right)x \id +
r\left(
\left((\alpha-\beta)^2 x^2 + 2 \alpha \beta\right)X\right)\right.\\
&+& \left.
(\alpha-\beta)^2 xy Y + (\alpha-\beta)^2 xz Z\right)
\end{eqnarray*}

Note that $||P||_{\tr} = \sum_j |\lambda_j(P)|$,
where $\lambda_j$ is the $j$-th eigenvalue of $P$.
A lengthy computation (using Mathematica),
and plugging in $\beta = \sqrt{1/2 - \alpha^2}$
and $y = \sqrt{1-x^2-z^2}$ shows that we have
\begin{eqnarray*}
\lambda_1(P) &=& \frac{1}{4}\left(\left(4\alpha^2 - 1\right)z -
r \sqrt{z^2 + 8 \alpha^2(2\alpha^2 - 1)(z^2-1)}\right)\\
\lambda_2(P) &=& \frac{1}{4}\left(\left(4\alpha^2 - 1\right)z +
r \sqrt{z^2 + 8 \alpha^2(2\alpha^2 - 1)(z^2-1)}\right)
\end{eqnarray*}
Similarly, we obtain for the Hadamard basis that
\begin{eqnarray*}
\lambda_1(T) &=& \frac{1}{4}\left(\left(4\alpha^2 - 1\right)x -
r \sqrt{x^2 + 8 \alpha^2(2\alpha^2 - 1)(x^2-1)}\right)\\
\lambda_2(T) &=& \frac{1}{4}\left(\left(4\alpha^2 - 1\right)x +
r \sqrt{x^2 + 8 \alpha^2(2\alpha^2 - 1)(x^2-1)}\right)
\end{eqnarray*}
We define
\begin{eqnarray*}
f(\alpha,x) &\assign& \left(\alpha^2 - \frac{1}{4}\right)x\\
g(\alpha,x) &\assign& \frac{1}{4} \sqrt{x^2 + 8 \alpha^2(2\alpha^2-1)(x^2-1)}.\\
h(\alpha,x,r) &\assign& |f(\alpha,x) + r g(\alpha,x)| + |f(\alpha,x) - r g(\alpha,x)|
\end{eqnarray*}
Note that our optimization problem now takes the form
\begin{sdp}{maximize}{$h(\alpha,x,r) + h(\alpha,z,r)$}
&$x^2+z^2\leq 1$\\
&$0\leq x \leq 1$\\
&$0 \leq z \leq 1$,
\end{sdp}
where we can introduce the last two inequality constraints without loss of generality, since the remaining three
measurement operators will be given by $XFX$, $ZFZ$, and $XZFZX$.

To show that we can let $y=0$ for the optimal solution, we have to show that for all $\alpha$ and all $r$,
the function $h(\alpha,x,r)$ is increasing on the interval $0 \leq x \leq 1$ (and indeed Mathematica
will convince you in an instant that this is the case). Our analysis is further complicated
by the absolute values. We therefore first consider
$$
h(\alpha,x,r)^2 = 2(f(\alpha,x)^2 + r^2 g(\alpha,x)^2 + |f(\alpha,x)^2 - r^2 g(\alpha,x)^2|,
$$
where we have used the fact that $f$ and $g$ are real valued functions. In principle, we can
now analyze
$h_+(\alpha,x,r)^2 = 2(f(\alpha,x)^2 + r^2 g(\alpha,x)^2 + f(\alpha,x)^2 - r^2 g(\alpha,x)^2$
and
$h_-(\alpha,x,r)^2 = 2(f(\alpha,x)^2 + r^2 g(\alpha,x)^2 - f(\alpha,x)^2 + r^2 g(\alpha,x)^2$
separately on their respective domains. By rewriting, we obtain
$$
h_+(\alpha,x,r)^2 = \frac{1}{4}r^2 (x^2 + 8 \alpha^2 (2 \alpha^2 - 1)(x^2-1)),
$$
and
$$
h_-(\alpha,x,r)^2 = 4\left(\alpha^2 - \frac{1}{4}\right)^2 x^2.
$$
Luckily, the first derivatives of
$h_+$ and $h_-$ turns out to be positive everywhere for our choice
of parameters $0 \leq \alpha \leq 1/\sqrt{2}$, and $0 \leq r,z \leq 1$.
Hence, by further inspection at the transitional points
we can conclude that $h$ is an increasing function of $x$.
But this means that to maximize our target expression, we must choose $x$ and
$z$ as large as possible. Hence, choosing $y=0$ is the best choice and our
claim follows.
\end{proof}

We can now immediately extend this analysis to find
\begin{claim}
Let $F$ be the operator that maximizes $C(F)$, and write $F$ as in Eq.~(\ref{Fdef}).
Then
$$
\ket{\phi} = g (\cos(\pi/8)\ket{0} + \sin(\pi/8)\ket{1}),
$$
for some $g \in \{\id,X,Z,XZ\}$.
\end{claim}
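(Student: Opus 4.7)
Building on the previous claim that places $\ket{\phi}$ in the XZ-plane, the remaining task is to pin down the angle. Using the monotonicity of $h(\alpha, \cdot, r)$ established in that proof, the optimum lies on the boundary $x^2+z^2=1$, so I parametrize $x=\sin\theta$, $z=\cos\theta$ with $\theta\in[0,\pi/2]$ and consider
\[
H(\theta) \assign h(\alpha, \sin\theta, r) + h(\alpha, \cos\theta, r).
\]
The target is that the Breidbart angle $\theta=\pi/4$ is the global maximizer, corresponding to the Bloch vector $(1/\sqrt{2},\,0,\,1/\sqrt{2})$, which is exactly $\ket{0}_B = \cos(\pi/8)\ket{0}+\sin(\pi/8)\ket{1}$. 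The remaining three measurement operators then arise as the $\{X,Z,XZ\}$-conjugates via Claim~\ref{maximumInvariant}, accounting for the Pauli factor $g$ in the statement.

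The structural fact I would exploit is that $H(\pi/2-\theta)=H(\theta)$, because swapping $\sin\theta$ and $\cos\theta$ amounts to exchanging the two bases, which leaves the symmetric objective invariant. Thus $\pi/4$ is automatically a critical point; the question is whether it is a maximum. I would split by the two-branch description $h(\alpha,x,r)=2\max(|f(\alpha,x)|,\,r|g(\alpha,x)|)$ inherited from the previous proof. On the $|f|$-branch, $h$ is linear in $x$, so $H(\theta)$ is proportional to $\sin\theta+\cos\theta$, manifestly maximized at $\pi/4$. On the $r|g|$-branch, the relevant quantity is $\sqrt{A\sin^2\theta+B}+\sqrt{A\cos^2\theta+B}$ with nonnegative $A=(4\alpha^2-1)^2$ and $B=8\alpha^2(1-2\alpha^2)$; squaring and using $(a+b)^2\leq 2(a^2+b^2)$ together with $\sin^2\theta+\cos^2\theta=1$ yields the upper bound $2\sqrt{A/2+B}$, attained exactly at $\theta=\pi/4$.

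The main obstacle, and what I expect to be the hard part, is the mixed regime in which one summand of $H$ lies on the linear branch while the other lies on the square-root branch. The transition threshold, $x_0^2 = r^2 B / [A(1-r^2)]$, is the same for $\sin\theta$ and $\cos\theta$, so by symmetry the mixed regime occupies two intervals symmetric about $\pi/4$. I would argue that at an interior maximum both summands must share a branch, either by showing that the one-sided derivatives at the crossover point $\sin\theta=x_0$ or $\cos\theta=x_0$ always point back toward $\pi/4$, or by observing that on a mixed interval $H$ is a sum of a concave $g$-contribution and a linear $f$-contribution, whose joint maximum on the relevant interval is bounded above by the common value $H(\pi/4)$ from the single-branch analysis. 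Once restricted to a single branch, the elementary bounds above complete the proof, and translating the optimal Bloch vector into a ket gives the claimed Breidbart form of $\ket{\phi}$.
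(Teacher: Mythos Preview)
Your route is genuinely different from the paper's. The paper does not split into branches: it simply notes that both pieces $h_-(\alpha,x,r)=2|f(\alpha,x)|$ and $h_+(\alpha,x,r)=2r\,g(\alpha,x)$ have nonnegative second derivative in $x$ (for $h_+$ one computes $\partial_x^2\sqrt{Ax^2+B}=AB/(Ax^2+B)^{3/2}\ge 0$), so their maximum $h$ is convex in $x$ as a whole. It then appeals to the Lagrangian of the constrained problem on $x^2+z^2=1$ to conclude $x=z$, never distinguishing which branch is active. By contrast, you treat the two branches separately with direct elementary inequalities; on the linear branch the bound $\sin\theta+\cos\theta\le\sqrt2$ is immediate, and on the square-root branch your Cauchy--Schwarz computation is clean and arguably more transparent than a Lagrange-multiplier argument.

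The gap is exactly where you locate it: the mixed regime. Neither of your two proposed fixes is solid as written. The second one leans on a ``concave $g$-contribution,'' but $x\mapsto\sqrt{Ax^2+B}$ is \emph{convex}, not concave (same second-derivative computation as above), so that argument does not go through; and even as a function of $\theta$, $\sqrt{A\sin^2\theta+B}$ is not globally concave on $[0,\pi/2]$. The first suggestion (one-sided derivatives at the crossover pointing back toward $\pi/4$) is plausible but requires an explicit comparison of the slopes of $|f|$ and $rg$ across the kink at $x_0$, which you have not carried out. The paper sidesteps this difficulty precisely because it never splits: convexity of $h$ is inherited by the pointwise maximum, so the Lagrangian analysis applies uniformly without a mixed case. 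If you want to keep your branch-by-branch framework, the cleanest patch is to import that single global fact---$h$ is convex in $x$---and use it to rule out interior maxima in the mixed interval, after which your elementary endpoint bounds finish the job.
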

\begin{proof}
Extending our analysis from the previous proof, we can compute the second
derivative of both functions. It turns out that also the second
derivatives are positive, and hence $h$ is convex in $x$.
By Claim~\ref{ignoreY}, we can rewrite our optimization problem as
\begin{sdp}{maximize}{$h(\alpha,x,r) + h(\alpha,z,r)$}
&$x^2+z^2 = 1$\\
&$0\leq x \leq 1$\\
&$0 \leq z \leq 1$
\end{sdp}
It now follows from the fact that $h$ is convex in $x$ and the constraint
$x^2 + z^2 = 1$ (by computing the Lagrangian of the above optimization problem),
that for the optimal solution we must have $x=z$, and our claim follows.
\end{proof}

\subsection{Optimality of the trivial strategies}
Now that we have shown that $F$ is in fact diagonal in the Breidbart basis
(or the bit flipped version thereof) we have only a single parameter left in our optimization problem.
We must now optimize over all operators $F$ of the form
$$
F = \alpha \outp{\phi}{\phi} + \sqrt{1/2 - \alpha^2} \outp{\phi^\perp}{\phi^\perp},
$$
where we may take $\ket{\phi}$ to be $\ket{0}_B$ or $\ket{1}_B$. Our aim is now to show that
either $F$ is the identity, or $F = \outp{\phi}{\phi}$ depending on the value of $r$.
\begin{claim}
Let $F$ be the operator that maximizes $C(F)$. Then
$F = c \id$ (for some $c \in \Real$) for $r \geq 1/\sqrt{2}$, and
$F = \outp{\phi}{\phi}$ for $r < 1/\sqrt{2}$, where
$$
\ket{\phi} = g (\cos(\pi/8)\ket{0} + \sin(\pi/8)\ket{1}),
$$
for some $g \in \{\id,X,Z,XZ\}$.
\end{claim}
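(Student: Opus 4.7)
The plan is to reduce the optimization over $F$ to a single variable $u := \alpha^2 \in [0,1/2]$ and then identify the maximizer by comparing two candidate extrema. With $\ket{\phi}$ now fixed to a Breidbart vector and $\beta = \sqrt{1/2-\alpha^2}$, $F$ is given by Eq.~(\ref{Fdef}) with a single free parameter. Substituting $x = z = 1/\sqrt{2}$ into the eigenvalue formulas for $P$ and $T$ derived in the proof of Claim~\ref{ignoreY}, the $x=z$ symmetry forces $\|T\|_{\tr} = \|P\|_{\tr}$, so
\begin{equation*}
C(F) = 2\,\|P\|_{\tr} = 2\bigl(|f - rg| + |f + rg|\bigr),
\end{equation*}
with $f = (u - 1/4)/\sqrt{2}$ and $g = \tfrac{1}{4}\sqrt{1/2 + 4u - 8u^2}$.

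The crucial simplification is the elementary identity $|a-b| + |a+b| = 2\max(|a|,|b|)$, applied with $a = f$ and $b = rg$. This collapses the expression to $C(F) = 4\max(|f(u)|, r\,g(u))$. A direct calculation yields $f^2 - g^2 = u(u - 1/2)$, which is non-positive on $[0,1/2]$, so $|f| \leq g$ throughout the feasible region and the identity applies without degeneracies. It now suffices to maximize the pointwise max of two explicit one-variable functions.

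Routine calculus shows that $|f(u)|$ attains its maximum $1/(4\sqrt{2})$ only at the endpoints $u \in \set{0, 1/2}$, whereas $r\,g(u)$ attains its maximum $r/4$ only at the interior critical point $u = 1/4$ (the vertex of the quadratic $1/2 + 4u - 8u^2$). Hence $\max_u C(F)/4 = \max(1/(4\sqrt{2}),\, r/4)$, and the threshold $r = 1/\sqrt{2}$ separates the two regimes. For $r \geq 1/\sqrt{2}$ the $rg$-branch wins and the optimum is at $u = 1/4$, giving $\alpha = \beta = 1/2$, so $F = \id/2$ is of the form $c\,\id$. For $r < 1/\sqrt{2}$ the $|f|$-branch wins and the optimum is at $u \in \set{0,1/2}$, giving $F$ proportional to $\outp{\phi}{\phi}$ or $\outp{\phi^\perp}{\phi^\perp}$; since $XZ\ket{\phi} \propto \ket{\phi^\perp}$, the two choices yield the same symmetrized measurement after applying Claim~\ref{maximumInvariant}, so $F$ is of the claimed form up to an element of $G$.

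The main obstacle is managing the two absolute values and verifying that the $\max$-identity applies cleanly over the entire feasible set; this is handled by the inequality $|f| \leq g$. Once $C(F)$ is written as $4\max(|f|, rg)$, the threshold value $r = 1/\sqrt{2}$ and the corresponding maximizers emerge directly from an elementary comparison of one-variable maxima.
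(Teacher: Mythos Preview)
Your argument is correct and reaches the same conclusion as the paper, but by a cleaner route. The paper proceeds by squaring $h(\alpha,r)=|f+rg|+|f-rg|$, splitting into cases according to the sign of $f^2-r^2g^2$, and then computing first and second derivatives on each piece to locate the two peaks at $\alpha=0$ and $\alpha=1/2$. You instead invoke the elementary identity $|a-b|+|a+b|=2\max(|a|,|b|)$, which collapses $C(F)$ to $4\max(|f(u)|,\,r\,g(u))$; since $\max_u\max(A(u),B(u))=\max\bigl(\max_u A(u),\max_u B(u)\bigr)$ holds trivially, the problem reduces to maximizing two explicit one-variable functions separately, and the threshold $r=1/\sqrt{2}$ drops out from comparing $\max_u|f|=1/(4\sqrt{2})$ with $\max_u rg=r/4$. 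This avoids all the derivative and case analysis of the paper and makes the location of the maximizers transparent.

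One small remark: your sentence ``so $|f|\le g$ throughout the feasible region and the identity applies without degeneracies'' is slightly misleading, since the identity $|a-b|+|a+b|=2\max(|a|,|b|)$ holds for all real $a,b$ and needs no hypothesis on the relative size of $|f|$ and $g$. The computation $f^2-g^2=u(u-1/2)\le0$ is not required for the argument (though it does confirm that the radicand defining $g$ stays nonnegative on $[0,1/2]$). You could simply drop that step without loss.
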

\begin{proof}
We can now plug in $x = z = 1/\sqrt{2}$ in the expressions for the eigenvalues in our previous proof.
Ignoring the constant factors which do not contribute to our argument, we can then write
\begin{eqnarray*}
\lambda_1(P) &=&
\left(4\alpha^2 - 1\right) -
r \sqrt{1 - 16 \alpha^4 + 8 \alpha^2}\\
\lambda_2(P) &=&
\left(4\alpha^2 - 1\right) +
r \sqrt{1 - 16 \alpha^4 + 8 \alpha^2}\\
\end{eqnarray*}
And similarly for the Hadamard basis.
We again define functions
\begin{eqnarray*}
f(\alpha) &\assign& \left(4\alpha^2 - 1\right)\\
g(\alpha) &\assign& \sqrt{1 -16 \alpha^4 + 8 \alpha^2}\\
h(\alpha,r) &\assign& |f(\alpha,x) + r g(\alpha,x)| + |f(\alpha,x) - r g(\alpha,x)|
\end{eqnarray*}
Note that our optimization problem now takes the form
\begin{sdp}{maximize}{$2h(\alpha,r)$}
&$0\leq \alpha \leq \frac{1}{\sqrt{2}}$
\end{sdp}
Since we are maximizing, we might as well consider the square of our target function
and ignore the leading constant as it is irrelevant for our argument.
$$
h(\alpha,r)^2 = 2(f(\alpha)^2 + r^2 g(\alpha)^2 + |f(\alpha)^2 - r^2 g(\alpha)^2|,
$$
To deal with the absolute value, we now perform a case analysis similar to the one above.
Computing the zeros crossings of the function $f(\alpha)^2 - r^2 g(\alpha)^2$,
we analyze each interval separately. Computing the first and second derivatives on
the intervals we find that $h(\alpha,r)^2$ has exactly two peaks: The first at $\alpha=0$,
and the second at $\alpha=1/2$. We have that $h(0,r)^2 = 2$ for all $r$, and
$h(1/2,r)^2 = 4 r^2$. Hence, we immediately see that the maximum is located at
$\alpha=0$ for $r \leq 1/\sqrt{2}$, and at $\alpha=1/2$ for $r \geq 1/\sqrt{2}$.
\end{proof}

Hence, we may conclude that Bob either measures in the Breidbart
basis, or stores the qubit as is, and Theorem~\ref{thm:depolarize} follows.

We believe that a similar analysis can be done for the dephasing channel, by first symmetrizing
the noise by applying a rotation over $\pi/4$ to our input states.
\end{document}